\newcommand{\CC}{{\mathcal C}}
\newcommand{\aaa}{{\tt a}}
\newcommand{\bbb}{{\tt b}}
\newcommand{\xa}{x}
\newcommand{\xb}{y}
\newcommand{\xc}{u}
\newcommand{\xd}{v}
\newcommand{\emp}{\varepsilon}
\newcommand{\pref}{{\rm pref}}
\newcommand{\abs}[1]{\left\vert #1 \right\vert}
\newcommand{\gen}[1]{\left\langle #1 \right\rangle}
\newcommand{\zero}{\ensuremath{\mathbf{0}}}
\newcommand{\one}{\ensuremath{\mathbf{1}}}
\newcommand{\ee}{\ensuremath{\mathfrak{e}}%
}
\newcommand{\ff}{\ensuremath{\mathfrak{f}}%
}
\newcommand{\isakw}[1]{\isakeyword{#1}}
\newcommand{\isaelem}[1]{#1} 
\newcommand{\isacomment}[1]{\hfill \textit{(* #1 *)}} 
\newtcolorbox{isaframe}[1][]{blanker, breakable, 
     left=3mm, right=3mm, top=1mm, bottom=1mm,
     borderline west={1pt}{0pt}{blue},
     before upper=\setlength{\parindent}{0pt},
     parbox=false, #1}
\begin{document}

\title{Binary intersection formalized}
\author{\v St\v ep\'an Holub\inst{1} and \v St\v ep\'an Starosta\inst{2}}

\institute{ 
	Dept. of Algebra, Faculty of Mathematics and Physics, Charles University, Czech Republic\\
	\email{holub@karlin.mff.cuni.cz}
	\and
	Dept. of Applied Math., Faculty of Information Technology, Czech Technical University in Prague, Czech Republic \\
	\email{stepan.starosta@fit.cvut.cz}
}
\maketitle

\begin{abstract}
	We provide a reformulation and a formalization of the classical result by Juhani Karhum\"aki characterizing intersections of two languages of the form $\{x,y\}^*\cap \{u,v\}^*$. We use the terminology of morphisms which allows to formulate the result in a shorter and more transparent way, and we formalize the result in the proof assistant Isabelle/HOL.   
\end{abstract}

\keywords{binary code; intersection; Isabelle/HOL}

\section{Introduction}
One of the classical results that deserve to be better known is the description Juhani Karhum\"aki gave in \cite{JuhaniIntersection} for the intersection of two free monoids of rank two, that is, for languages of the form $\{x,y\}^*\cap \{u,v\}^*$ where $x$ and $y$, as well as $u$ and $v$, do not commute. The purpose of this article is twofold. First, we reformulate here the result in terms of morphisms which allows an exposition that is much shorter, and hopefully also more transparent. This layer of the article is a slightly modified version of \cite{IntersectionRevisited}. Second, we complement the improved ``human'' proof with a formalization in the proof assistant Isabelle/HOL.

It is well known that an intersection of two free submonoids of a free monoid is free.  On the other hand, the intersection $\{x,y\}^*\cap \{u,v\}^*$ can have infinite rank. The Theorem 2 in \cite{JuhaniIntersection} gives two possible forms:
$\{\beta,\gamma\}^*$
and
$ (\beta_0+\beta(\gamma(1+\delta + \dots + \delta^t))^*\epsilon)^*$. The original proof spans about fifteen pages (without Preliminaries). The proof often  crucially relies on ``the way'' certain words are ``built up'' from words $x$ and $y$, and/or $u$ and $v$. This is exactly the kind of argument that is much easier to make if $x$ and $y$ ($u$ and $v$) are seen as images of a binary morphism which is demonstrated in the present article. An important feature of our reformulation is that it allowed to identify the difficult core of the proof, namely Lemma \ref{le:last_block}. Given this lemma, the rest of the proof is a fairly straightforward. We refer to \cite{IntersectionRevisited} for a more detailed comparison of the two approaches.

Our second contribution is a formalization of the result in the proof assistant Isabelle/HOL. To our knowledge, this is the first formalization of a comparable result in Combinatorics on Words. We believe that computer assisted proofs are highly desirable in our field which typically features high level of technicality. The verified formalization not only makes sure that the result is correct, but also allows to outsource tedious and uninspiring work where it belongs, namely to computers.
We try to provide a reader without any experience with this kind of research with the rough idea of what it entails. It may perhaps serve as a very modest introduction into some basic features of formalization using Isabelle/HOL.  The full working formalization is published in the repository \cite{formalcow-binaryintersection}.

\section{Preliminaries}
Words are lists of letters from a given alphabet. They form a (free) monoid with the operation of concatenation and the neutral element, the empty word, that is denoted $\emp$. If the alphabet is $\Sigma$ then the monoid of lists is typically denoted by $\Sigma^*$ using the Kleene star. There is an ambivalence in this notation. If $Q$ is a subset of a monoid $M$, then $Q^*$ denotes the submonoid generated by $Q$ in $M$, that is, more algebraically, the submonoid $\gen Q$.
However, elements of the alphabet are \emph{not} words! This is typically ignored, or at best glossed over by identification of letters with words of length one. However, in the context of the formalization, we have to keep in mind the difference. In our convention, the expression $\Sigma^*$ is equivalent to $\gen \Sigma$, which means that $\Sigma$ is not the set of letters but the set of \emph{singleton words}, that is, words of length one. In the particular case of the binary alphabet, we shall use the generating set  $A = \{\zero,\one\}$ where $\zero$ is the word $[0]$ and $\one$ the word $[1]$.

The fact that $u$ is a prefix (suffix resp.) of $v$ is denoted $u \leq_p v$ ($u \leq_s v$ resp.). If $u \leq_p v$ ($u \leq_s v$ resp.) and $u \neq v$, then $u$ is a \emph{proper} prefix (suffix resp.) of $u$. 
We shall denote the longest common prefix (suffix resp.) of $u$ and $v$ by $u\wedge_p v$ ($u\wedge_s v$ resp.). Two words are \emph{prefix-comparable}, denoted $u \bowtie v$, (\emph{suffix-comparable}, denoted $\bowtie_s$, resp.) if one of them is a prefix (suffix resp.) of the other. 
If we want to say that $u$ is a prefix (suffix resp.) of some sufficiently large power of $v$, we say that $u$ is a prefix (suffix resp.) of $v^*$. Concepts of concatenation, prefix and suffix are extended to pairs in the obvious way.

We shall use the standard notation of regular expressions to describe certain sets of words. Note that $\{u,v\}^*$ is an alternative notation for $(u+v)^*$. In regular expressions, the empty word is represented by $1$.

If $u$ is a prefix (suffix resp.) of $v$, then $u^{-1} v$ ($v u^{-1}$ resp.) denotes the unique word such that $v = uz$ ($v = zu$ resp.). The expressions $u^{-1} v$  ($v u^{-1}$ resp.) is undefined otherwise.

A pair of noncommuting words is also called a \emph{binary code}. We need the following properties of binary codes (see \cite[Lemma 3.1]{COWhandbook}).  If $u$ and $v$ do not commute, then the word $\alpha = uv \wedge_p vu$ is prefix-comparable with all words in $\{u,v\}^*$. Moreover, there are distinct letters $c_u$ and $c_v$ such that $\alpha c_u$ is prefix-comparable with each word in $u\{u,v\}^*$ and $\alpha c_v$ is prefix comparable with each word in $v\{u,v\}^*$. We shall use these facts for suffixes analogously. They directly imply a weak version of the Periodicity lemma in the following form: 
\begin{lemma}\label{pl}
	If $w$ is a common prefix (suffix resp.) of $u^*$ and $v^*$ and 
	$|u|+|v| \leq |w|$, then $u$ and $v$ commute.
\end{lemma}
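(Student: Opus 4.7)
The plan is to argue by contraposition using the structural facts about binary codes recalled in the paragraph preceding the lemma. Assume $w$ is a common prefix of $u^*$ and $v^*$ with $|u|+|v|\leq|w|$, and suppose for contradiction that $u$ and $v$ do not commute (the case where one of them is empty is trivial, since then they commute, so we may assume both are nonempty).

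By the cited properties of binary codes, the word $\alpha = uv\wedge_p vu$ is prefix-comparable with every element of $\{u,v\}^*$, and there exist distinct letters $c_u, c_v$ such that $\alpha c_u$ is prefix-comparable with every word in $u\{u,v\}^*$ and $\alpha c_v$ is prefix-comparable with every word in $v\{u,v\}^*$. Since $uv\neq vu$ and $|uv|=|vu|$, $\alpha$ is a proper prefix of $uv$, hence $|\alpha|\leq |u|+|v|-1 < |w|$, so in particular $|\alpha c_u|=|\alpha c_v|=|\alpha|+1\leq |w|$.

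Next, I would choose $n,m\geq 1$ large enough that $w\leq_p u^n$, $w\leq_p v^m$, $|u^n|\geq |w|$, and $|v^m|\geq |w|$; such $n,m$ exist because $u,v$ are nonempty. Since $n\geq 1$, we have $u^n\in u\{u,v\}^*$, so $\alpha c_u$ is prefix-comparable with $u^n$; as $|\alpha c_u|\leq |w|\leq |u^n|$ this forces $\alpha c_u\leq_p u^n$, and combined with $w\leq_p u^n$ and $|\alpha c_u|\leq|w|$ we deduce $\alpha c_u\leq_p w$. By the symmetric argument applied to $v^m$, we also get $\alpha c_v\leq_p w$. But then the $(|\alpha|+1)$-th letter of $w$ equals both $c_u$ and $c_v$, contradicting $c_u\neq c_v$.

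The suffix case is obtained by exactly the same reasoning, replacing prefix-comparability with suffix-comparability and invoking the dual versions of the cited facts. I do not expect any real obstacle here: the only thing to be careful about is handling the trivial case where $u$ or $v$ is empty, and making sure $n$ and $m$ are chosen large enough so that $\alpha c_u$ and $\alpha c_v$ are genuine prefixes of $u^n$ and $v^m$ rather than the other way around.
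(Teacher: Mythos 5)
Your proof is correct and is essentially the paper's own argument: the paper gives no explicit proof, merely asserting that the cited binary-code properties (prefix-comparability of $\alpha = uv \wedge_p vu$ with $\{u,v\}^*$ and the distinct continuation letters $c_u$, $c_v$) ``directly imply'' the lemma, and your write-up is precisely that direct implication spelled out, including the correct length estimate $|\alpha c_u| \leq |u|+|v| \leq |w|$ forcing both $\alpha c_u$ and $\alpha c_v$ to be prefixes of $w$.
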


A binary morphism $f$ (defined on $\{\zero,\one\}^*$) is called \emph{marked} if $\pref_1(f(\zero))\neq \pref_1(f(\one))$, where $\pref_1(u)$ denotes the first letter of $u$. For a general binary morphism $f$, its \emph{marked version} $f_m$ is the morphism defined by $f_m(u) = \alpha_f^{-1}f(u)\alpha_f$ where $\alpha_f = f(\zero\one)\wedge_p f(\one\zero)$. It is easy to see, from the facts mentioned above, that the definition of $f_m$ is correct, and that $f_m$ is marked. 

We remark that, compared to \cite{IntersectionRevisited}, we adopt a more elementary approach, and do not use the powerful technique of the free basis and the Graph lemma. While using the Graph lemma in general makes certain arguments much more comfortable, in our particular case it turns out that the exposition is only negligibly affected by this choice.

\section{Formalizing the proof using Isabelle/HOL automatic proof assistant}

Isabelle\footnote{\url{https://isabelle.in.tum.de}} is a generic proof assistant allowing a formalization of mathematical formulas and their proofs.
Isabelle was originally developed at the University of Cambridge and Technische Universität München, but now includes numerous contributions from institutions and individuals worldwide.
The most important instantiation of Isabelle to higher-order logic is Isabelle/HOL, the reader might consult for instance \cite{PaulsonNW-FAC19} for more details on Isabelle/HOL. The freely available distribution of the proof assistant also contains detailed documentation.

As mentioned in Introduction, one of the goals of this article is to provide a formalization of the presented result (and of its proof).
This is done in Isabelle/HOL.
The full formalization is available at \cite{formalcow-binaryintersection}.
In this article, we give an overview of key concepts, with comments suitable for readers not familiar with Isabelle/HOL.
If a reader is not interested in this formalization, these sections may be skipped. 

We start by introducing the formalization of the main ideas of Preliminaries.
The core building stones of Isabelle are datatypes, terms and formulae.
Our basic datatype, used for a word, is a list, which is in Isabelle equipped with many needed tools such as concatenation, denoted as multiplication.

\subsection{Words and their datatype}

To capture a word over a binary alphabet, we use a custom datatype which allows to work with all binary words.
The following code defines the datatype consisting of two values \isaelem{bin0} and \isaelem{bin1}:

\begin{isaframe}
\isacommand{datatype}\isamarkupfalse%
\ binA\ {\isacharequal}\ bin{\isadigit{0}}\ {\isacharbar}\ bin{\isadigit{1}}
\end{isaframe}

The next declarations set up abbreviations for the two words of length 1, denoted by $\zero$ and $\one$ (these are the lists of length $1$, denoted by \isaelem{[bin0]} and \isaelem{[bin1]}).

\begin{isaframe}
\isacommand{abbreviation}\isamarkupfalse%
\ bin{\isacharunderscore}word{\isacharunderscore}{\isadigit{0}}\ \ {\isacharcolon}{\isacharcolon}\ {\isachardoublequoteopen}binA\ list{\isachardoublequoteclose}\ {\isacharparenleft}{\isachardoublequoteopen}{\isasymzero}{\isachardoublequoteclose}{\isacharparenright}\ \isanewline 
\ \ \isakeyword{where} {\isachardoublequoteopen}bin{\isacharunderscore}word{\isacharunderscore}{\isadigit{0}}\ {\isasymequiv}\ {\isacharbrackleft}bin{\isadigit{0}}{\isacharbrackright}{\isachardoublequoteclose}

\isacommand{abbreviation}\isamarkupfalse%
\ bin{\isacharunderscore}word{\isacharunderscore}{\isadigit{1}}\ {\isacharcolon}{\isacharcolon}\ {\isachardoublequoteopen}binA\ list{\isachardoublequoteclose}\ {\isacharparenleft}{\isachardoublequoteopen}{\isasymone}{\isachardoublequoteclose}{\isacharparenright}\ \isanewline 
\ \ \isakeyword{where} {\isachardoublequoteopen}bin{\isacharunderscore}word{\isacharunderscore}{\isadigit{1}}\ {\isasymequiv}\ {\isacharbrackleft}bin{\isadigit{1}}{\isacharbrackright}{\isachardoublequoteclose}%
\end{isaframe}

As an example, we exhibit the claim that all lists over the constructed datatype \isaelem{binA} are generated by the two words of length $1$.
The keyword \isaelem{UNIV} stands for the set of all elements of given type (types are inferred automatically).

\begin{isaframe}
\isacommand{lemma}\isamarkupfalse%
\ A{\isacharunderscore}generates{\isacharcolon}\ {\isachardoublequoteopen}{\isasymlangle}{\isacharbraceleft}{\isasymzero}{\isacharcomma}{\isasymone}{\isacharbraceright}{\isasymrangle}\ {\isacharequal}\ UNIV{\isachardoublequoteclose}%
\isanewline
\ \ \isacommand{by}\isamarkupfalse%
\ {\isacharparenleft}metis\ A{\isacharunderscore}singletons\ basis{\isacharunderscore}gen{\isacharunderscore}monoid\ bin{\isacharunderscore}UNIV\ lists{\isacharunderscore}UNIV\ lists{\isacharunderscore}basis\ words{\isacharunderscore}univ{\isachardot}FMonoid{\isacharunderscore}axioms{\isacharparenright}%
\end{isaframe}

The proof verified by Isabelle is given on the second line.
It gives the proof method (here \isaelem{metis}) and the names of used claims (supplied in the full code).
This formalization includes most of the concepts mentioned in Preliminaries (in general, when possible, we keep the same notation in the formalization).
For instance, let us exhibit the definition of a prefix and its notation $\leq_p$:
\begin{isaframe}
\isacommand{definition}\isamarkupfalse%
\ Prefix\ {\isacharparenleft}\isakeyword{infixl}\ {\isachardoublequoteopen}{\isasymle}p{\isachardoublequoteclose}\ {\isadigit{5}}{\isadigit{0}}{\isacharparenright}\ \isakeyword{where}\ prefdef{\isacharbrackleft}simp{\isacharbrackright}{\isacharcolon}\ \ {\isachardoublequoteopen}u\ {\isasymle}p\ v\ {\isasymequiv}\ {\isasymexists}\ z{\isachardot}\ v\ {\isacharequal}\ u\ {\isasymcdot}\ z{\isachardoublequoteclose}%
\end{isaframe}

As morphisms and their marked version form an important part of used tools, we next give their formalization details, along with further Isabelle's core concepts.

\subsection{Morphisms and their marked versions}

We formalize the concept of a (general) morphism using \isakw{locale}, the Isabelle's environment used to deal with parametric theories.
In particular, a locale allows to introduce global parameters (introduced by the keyword \isakw{fixes}) and assumptions (introduced by the keyword \isakw{assumes}), thus prevents unnecessary repetition of assumptions in every lemma.
As an illustration, we exhibit a simple claim and its proof using these assumptions, called \isakw{context} in Isabelle, and delimited by keywords \isakw{begin} and \isakw{end}.

\begin{isaframe}
\isamarkuptrue%
\isacommand{locale}\isamarkupfalse%
\ morphism\ {\isacharequal}\isanewline
\ \ \isakeyword{fixes}\ f\isanewline
\ \ \isakeyword{assumes}\ morph{\isacharcolon}\ {\isachardoublequoteopen}f\ {\isacharparenleft}u\ {\isasymcdot}\ v{\isacharparenright}\ {\isacharequal}\ f\ u\ {\isasymcdot}\ f\ v{\isachardoublequoteclose}\isanewline
\isakeyword{begin}\isanewline
\ \ \isacommand{lemma}\isamarkupfalse%
\ empty{\isacharunderscore}to{\isacharunderscore}empty{\isacharcolon}\ {\isachardoublequoteopen}f\ {\isasymepsilon}\ {\isacharequal}\ {\isasymepsilon}{\isachardoublequoteclose}\isanewline
\ \ \ \ \isacommand{by}\isamarkupfalse%
\ {\isacharparenleft}metis\ morph\ self{\isacharunderscore}append{\isacharunderscore}conv{\isadigit{2}}{\isacharparenright}%
\isanewline
\isacommand{end}
\end{isaframe}

Such a lemma in the context in fact produces a claim named 
\isaelem{morphism{\isachardot}empty{\isacharunderscore}to{\isacharunderscore}empty} which is equivalent to the following lemma:

\begin{isaframe}
\isacommand{lemma}\isamarkupfalse%
\ {\isachardoublequoteopen}morphism\ f\ {\isasymLongrightarrow}\ f\ {\isasymepsilon}\ {\isacharequal}\ {\isasymepsilon}{\isachardoublequoteclose}\isanewline
\ \ \isacommand{by}\isamarkupfalse%
\ {\isacharparenleft}simp\ add{\isacharcolon}\ morphism{\isachardot}empty{\isacharunderscore}to{\isacharunderscore}empty{\isacharparenright}%
\end{isaframe}

Note that the assumption named \isaelem{morph} contains a term with two free variables, \isaelem{u} and \isaelem{v}, with no quantifiers.
As customary, such variables are understood to be universally quantified, that is, the assumption holds for all \isaelem{u} and \isaelem{v}.
Since types are inferred automatically, this assumption implies that \isaelem{u} and \isaelem{v} are lists, and \isaelem{f} is a mapping from lists to lists.

As mentioned in Introduction, we see elements of a binary code as images by a morphism. Accordingly, a binary code is formalized by extending the locale \isaelem{morphism} by an additional assumption on the images of singletons as follows.
This gives raise to a new locale \isaelem{binary{\isacharunderscore}code}:

\begin{isaframe}
\isamarkuptrue%
\isacommand{locale}\isamarkupfalse%
\ binary{\isacharunderscore}code\ {\isacharequal}\ morphism\ {\isachardoublequoteopen}f\ {\isacharcolon}{\isacharcolon}\ binA\ list\ {\isasymRightarrow}\ {\isacharprime}a\ list{\isachardoublequoteclose}\ \ \isakeyword{for}\ f\ {\isacharplus}\isanewline
\ \ \isakeyword{assumes}\ bin{\isacharunderscore}code{\isacharcolon}\ {\isachardoublequoteopen}f\ {\isasymzero}\ {\isasymcdot}\ f\ {\isasymone}\ {\isasymnoteq}\ f\ {\isasymone}\ {\isasymcdot}\ f\ {\isasymzero}{\isachardoublequoteclose}\
\end{isaframe}

The declaration
\isaelem{
{\isachardoublequoteopen}f\ {\isacharcolon}{\isacharcolon}\ binA\ list\ {\isasymRightarrow}\ {\isacharprime}a\ list{\isachardoublequoteclose}
}
specifies the datatype of the parameter \isaelem{f}.
The given datatype is a mapping from all lists over the datatype \isaelem{binA} to the lists over a generic unspecified datatype \isaelem{{\isacharprime}a}, thus setting the domain to be all binary words.

The next pointed out formalization step are the definitions of $\alpha$ and $f_m$ in the context of \isaelem{binary{\isacharunderscore}code}, i.e., for a given morphism \isaelem{f}.

\begin{isaframe}
\isacommand{definition}\isamarkupfalse%
\ \ {\isasymalpha}\ \isakeyword{where}\ {\isasymalpha}LCP{\isacharbrackleft}simp{\isacharbrackright}{\isacharcolon}\ {\isachardoublequoteopen}{\isasymalpha}\ {\isacharequal}\ \ f\ {\isacharparenleft}{\isasymzero}\ {\isasymcdot}\ {\isasymone}{\isacharparenright}\ {\isasymand}\isactrlsub p\ f\ {\isacharparenleft}{\isasymone}\ {\isasymcdot}\ {\isasymzero}{\isacharparenright}{\isachardoublequoteclose}%

\isacommand{definition}\isamarkupfalse%
\ \ f\isactrlsub m\ \isakeyword{where}\ f\isactrlsub m{\isacharunderscore}def{\isacharbrackleft}simp{\isacharbrackright}{\isacharcolon}\ {\isachardoublequoteopen}f\isactrlsub m\ {\isacharequal}\ {\isacharparenleft}{\isasymlambda}\ w{\isachardot}\ {\isacharparenleft}{\isasymalpha}{\isasyminverse}\ {\isasymcdot}\ {\isacharparenleft}f\ w{\isacharparenright}\ {\isasymcdot}\ {\isasymalpha}{\isacharparenright}{\isacharparenright}{\isachardoublequoteclose}%
\end{isaframe}

The definition of \isaelem{f\isactrlsub m} is done using a nameless function using $\lambda$-calculus conventions.

The next claim is also in the context of \isaelem{binary{\isacharunderscore}code}, giving an essential statement on $\alpha$: $\alpha$ is a prefix of $f(w)\alpha$ for every $w$.
We display the formalized proof as well; it is done by induction on the list \isaelem{w} (that is, the base case is the empty list, and the induction step proves the claim for the list \isaelem{[a]$\cdot$w} assuming that it holds for $w$).

\begin{isaframe}
\isacommand{lemma}\isamarkupfalse%
\ {\isasymalpha}w{\isasymalpha}{\isacharcolon}\ {\isachardoublequoteopen}{\isasymalpha}\ {\isasymle}p\ f\ w\ {\isasymcdot}\ {\isasymalpha}{\isachardoublequoteclose}\isanewline
\isacommand{proof}\isamarkupfalse%
{\isacharparenleft}induct\ w{\isacharparenright}\isanewline
\ \ \isacommand{case}\isamarkupfalse%
\ Nil \isacomment{case $w = \epsilon$}
\isanewline
\ \ \isacommand{then}\isamarkupfalse%
\ \isacommand{show}\isamarkupfalse%
\ {\isacharquery}case\isanewline
\ \ \ \ \isacommand{by}\isamarkupfalse%
\ simp\isanewline
\ \ \isacommand{case}\isamarkupfalse%
\ {\isacharparenleft}Cons\ a\ w{\isacharparenright} \isacomment{induction step: case $w' = aw$, $\alpha \leq_p f(w)\alpha$}
\isanewline
\ \ \isacommand{then}\isamarkupfalse%
\ \isacommand{show}\isamarkupfalse%
\ {\isacharquery}case\ \isanewline
\ \ \isacommand{proof}\isamarkupfalse%
{\isacharminus}\isanewline
\ \ \ \ \isacommand{have}\isamarkupfalse%
\ {\isachardoublequoteopen}{\isasymalpha}\ {\isasymle}p\ f\ {\isacharbrackleft}a{\isacharbrackright}\ {\isasymcdot}\ {\isasymalpha}{\isachardoublequoteclose}\ \ \isanewline
\ \ \ \ \ \ \isacommand{using}\isamarkupfalse%
\ {\isasymalpha}{\isadigit{0}}{\isasymalpha}\ {\isasymalpha}{\isadigit{1}}{\isasymalpha}\ alphabet{\isacharunderscore}or\ \isacommand{by}\isamarkupfalse%
\ metis\isanewline
\ \ \ \ \isacommand{show}\isamarkupfalse%
\ {\isacharquery}thesis\ \isanewline
\ \ \ \ \ \ \isacommand{using}\isamarkupfalse%
\ pref{\isacharunderscore}prolong{\isacharbrackleft}OF\ {\isacartoucheopen}{\isasymalpha}\ {\isasymle}p\ f\ {\isacharbrackleft}a{\isacharbrackright}\ {\isasymcdot}\ {\isasymalpha}{\isacartoucheclose}\ \ {\isacartoucheopen}{\isasymalpha}\ {\isasymle}p\ f\ w\ {\isasymcdot}\ {\isasymalpha}{\isacartoucheclose}\ {\isacharbrackright}\isanewline
\ \ \ \ \ \ \ \ hd{\isacharunderscore}word{\isacharbrackleft}of\ \ a\ w{\isacharbrackright}\isanewline
\ \ \ \ \ \ \isacommand{by}\isamarkupfalse%
\ {\isacharparenleft}metis\ append{\isacharunderscore}assoc\ morph{\isacharparenright}\ \ \isanewline
\ \ \isacommand{qed}\isamarkupfalse%
\isanewline
\isacommand{qed}\isamarkupfalse%
\end{isaframe}
This proof gives a rough idea about the level of detail contained in the formalization. Note that the induction step uses the validity of the claim for singletons (facts named $\alpha$0$\alpha$ and $\alpha$1$\alpha$) and the simple fact (called pref{\isacharunderscore}prolong) which claims that if $w \leq_p zr$ and $r \leq_p s$, then $w \leq_p zs$. The latter claim illustrates what can be considered a single step in the formalization. Note nevertheless that even this step is based on an auxiliary lemma which is proved elsewhere using even more elementary auxiliary lemmas.

\section{The result}

Let $G=\{\xa,\xb\}$ and $H=\{\xc,\xd\}$ be two binary codes, that is  $\xa \xb \neq \xb \xa$ and $\xc \xd \neq \xd \xc$.
Our aim is to describe the intersection $I = G^*\cap H^*$. The aim is achieved by a series of reformulations.

First, we shall see the languages $G^*$ and $H^*$ as ranges of the morphisms $g_0$ and $h_0$ over $A^* = \{\zero,\one\}^*$, defined by $G = \left\{ g_0(\zero), g_0(\one) \right\}$ and $H = \left\{ h_0(\zero), h_0(\one) \right\}$.
The structure of the intersection of $G^*$ and $H^*$ will follow from a stronger result: a characterization of the coincidence set of $g_0$ and $h_0$, defined by 
\[
\CC(g_0,h_0) = \{ (r,s) \in A^*\times A^* \mid g_0(r) = h_0(s) \}.
\]
Indeed, we have 
\[I  = \{g_0(r) \mid (r,s) \in \CC(g_0,h_0)\} = \{h_0(s) \mid (r,s) \in \CC(g_0,h_0)\}\,.\]

Second, instead of $\CC(g_0,h_0)$ we shall investigate 
\[
\CC(g,h) = \{ (r,s) \in A^*\times A^* \mid g(r) = h(s) \},
\]
where $g$ is the marked version of $g_0$, and $h$ is the marked version of $h_0$.
The set $\CC(g,h)$ is easier to investigate since both $g$ and $h$ are marked.
The more difficult part of the result is establishing the relationship between $\CC(g,h)$ and $\CC(g_0,h_0)$.

Assume that $I$ contains a nonempty word, that is, that there are nonempty words $r$ and $s$ such that $g_0(r) = h_0(s)$. Then both $\alpha_g$ and $\alpha_h$ are  prefixes of $g_0(r)^i = h_0(s)^i$ for a sufficiently large $i$, which implies that $\alpha_g$ and $\alpha_h$ are prefix comparable. Without loss of generality we shall suppose $\alpha_h \leq \alpha_g$. Let $\alpha = \alpha_h^{-1}\alpha_g$. Then 
\begin{equation} \label{solutions_to_marked}
g_0(r) = h_0(s) \quad \text{if and only if} \quad \alpha g(r) = h(s) \alpha.
\end{equation}  

\subsection*{Formalization: basic locales and the coincidence set}

The morphisms $g_0$ and $h_0$ are formalized as two instances of the locale \isaelem{binary{\isacharunderscore}code}, producing a new locale \isaelem{binary{\isacharunderscore}intersection{\isacharunderscore}possibly{\isacharunderscore}empty}.
This gives access to the words $\alpha_g$ and $\alpha_h$ and to marked versions of $g_0$ and $h_0$, which obtain their expected names using \isakw{notation}.
(It also gives access to the auxiliary claims of \isaelem{binary{\isacharunderscore}code} for the two morphisms.)
The assumption $\alpha_h \leq_p \alpha_g$ is then added in yet another locale.

\begin{isaframe}
\isamarkuptrue%
\isacommand{locale}\isamarkupfalse%
\ \ binary{\isacharunderscore}intersection{\isacharunderscore}possibly{\isacharunderscore}empty\ {\isacharequal}\ 
\isanewline
\ g{\isadigit{0}}{\isacharcolon}\ binary{\isacharunderscore}code\ {\isachardoublequoteopen}g\isactrlsub {\isadigit{0}}\ {\isacharcolon}{\isacharcolon}\ binA\ list\ {\isasymRightarrow}\ {\isacharprime}a\ list{\isachardoublequoteclose}\ {\isacharplus}\ h{\isadigit{0}}{\isacharcolon}\ binary{\isacharunderscore}code\ {\isachardoublequoteopen}h\isactrlsub {\isadigit{0}}\ {\isacharcolon}{\isacharcolon}\ binA\ list\ {\isasymRightarrow}\ {\isacharprime}a\ list{\isachardoublequoteclose}\ \isanewline
\ \isakeyword{for}\ g\isactrlsub {\isadigit{0}}\ h\isactrlsub {\isadigit{0}}

\isakeyword{begin}\isanewline
\isacommand{notation}\isamarkupfalse%
\ h{\isadigit{0}}{\isachardot}{\isasymalpha}\ {\isacharparenleft}{\isachardoublequoteopen}{\isasymalpha}\isactrlsub h{\isachardoublequoteclose}{\isacharparenright}
\isacomment{setting the notation $\alpha_h$ to $\alpha$ from the parent locale representing $h_0$ }
\isanewline
\isacommand{notation}\isamarkupfalse%
\ g{\isadigit{0}}{\isachardot}{\isasymalpha}\ {\isacharparenleft}{\isachardoublequoteopen}{\isasymalpha}\isactrlsub g{\isachardoublequoteclose}{\isacharparenright}%
\isanewline
\isacommand{notation}\isamarkupfalse%
\ h{\isadigit{0}}{\isachardot}f\isactrlsub m\ {\isacharparenleft}{\isachardoublequoteopen}h{\isachardoublequoteclose}{\isacharparenright}\isanewline
\isacommand{notation}\isamarkupfalse%
\ g{\isadigit{0}}{\isachardot}f\isactrlsub m\ {\isacharparenleft}{\isachardoublequoteopen}g{\isachardoublequoteclose}{\isacharparenright}\isanewline
\isakeyword{end} \isanewline

\isacommand{locale}\isamarkupfalse%
\ \ binary{\isacharunderscore}intersection\ {\isacharequal}\ \ binary{\isacharunderscore}intersection{\isacharunderscore}possibly{\isacharunderscore}empty\ {\isacharplus}\isanewline
\ \isakeyword{assumes}\ alphas{\isacharcolon}\ {\isachardoublequoteopen}{\isasymalpha}\isactrlsub h\ {\isasymle}p\ {\isasymalpha}\isactrlsub g{\isachardoublequoteclose}
\isanewline
\isakeyword{begin}
\isanewline
\isacommand{definition}\isamarkupfalse%
\ {\isasymalpha}\ \isakeyword{where}\ {\isachardoublequoteopen}{\isasymalpha}\ {\isasymequiv}\ {\isasymalpha}\isactrlsub h{\isasyminverse}\ {\isasymcdot}\ {\isasymalpha}\isactrlsub g{\isachardoublequoteclose}%
\isanewline
\isakeyword{end}
\end{isaframe}

Using the datatype used for a binary morphism (\isaelem{binA\ list\ {\isasymRightarrow}\ {\isacharprime}a\ list}), we define the coincidence set $\CC$ as follows:

\begin{isaframe}
\isamarkuptrue%
\isacommand{definition}\isamarkupfalse%
\ Coincidence{\isacharunderscore}Set\ {\isacharcolon}{\isacharcolon}\ {\isachardoublequoteopen}{\isacharparenleft}binA\ list\ {\isasymRightarrow}\ {\isacharprime}a\ list{\isacharparenright}\ {\isasymRightarrow}\ {\isacharparenleft}binA\ list\ {\isasymRightarrow}\ {\isacharprime}a\ list{\isacharparenright}\ {\isasymRightarrow}\ {\isacharparenleft}binA\ list\ {\isasymtimes}\ binA\ list{\isacharparenright}\ set{\isachardoublequoteclose}\ {\isacharparenleft}{\isachardoublequoteopen}{\isasymC}{\isacharbrackleft}{\isacharunderscore}{\isacharcomma}{\isacharunderscore}{\isacharbrackright}{\isachardoublequoteclose}{\isacharparenright}\isanewline
\ \ \isakeyword{where}\ {\isachardoublequoteopen}Coincidence{\isacharunderscore}Set\ g\ h\ {\isasymequiv}\ {\isacharbraceleft}{\isacharparenleft}r{\isacharcomma}s{\isacharparenright}{\isachardot}\ g\ r\ {\isacharequal}\ h\ s{\isacharbraceright}{\isachardoublequoteclose}%
\end{isaframe}

The crucial relation between $\CC(g_0,h_0)$ and $\CC(g,h)$ is formalized as an equivalence (denoted by \isaelem{\isasymequiv}):

\begin{isaframe}
\isacommand{lemma}\isamarkupfalse%
\ solution{\isacharunderscore}marked{\isacharunderscore}version{\isacharcolon}\ {\isachardoublequoteopen}g\isactrlsub {\isadigit{0}}\ r\ {\isacharequal}\ h\isactrlsub {\isadigit{0}}\ s\ {\isasymequiv}\ {\isasymalpha}\ {\isasymcdot}\ g\ r\ {\isacharequal}\ h\ s\ {\isasymcdot}\ {\isasymalpha}{\isachardoublequoteclose}\isanewline
\ \ \isacommand{using}\isamarkupfalse%
\ gmarked{\isachardot}f\isactrlsub m{\isacharunderscore}conjugates\ hmarked{\isachardot}f\isactrlsub m{\isacharunderscore}conjugates\ {\isasymalpha}def\isanewline
\ \ \isacommand{by}\isamarkupfalse%
\ {\isacharparenleft}smt\ append{\isacharunderscore}assoc\ append{\isacharunderscore}same{\isacharunderscore}eq\ g{\isadigit{0}}{\isachardot}f\isactrlsub m{\isacharunderscore}conjugates\ h{\isadigit{0}}{\isachardot}f\isactrlsub m{\isacharunderscore}conjugates\ same{\isacharunderscore}append{\isacharunderscore}eq{\isacharparenright}
\end{isaframe}

Again, the displayed proof references auxiliary claims that are not present in the excerpt from the whole formalization which consists of formalizing many ``obvious'' steps.

\subsection{Block structure of $\CC(g_0,h_0)$}

We call pairs $(r,s) \in \CC(g_0,h_0)$ \emph{solutions}.
$\CC(g_0,h_0)$ is a free semigroup and the elements of its minimal generating set are \emph{minimal solutions}.

The structure of $\CC(g_0,h_0)$ heavily depends on the existence of the following three pairs of words, called \emph{blocks}: We say that $(p,q)$ is the \emph{starting block} if 
$\alpha g(p) = h(q)$, and $\alpha g(p') \neq  h(q')$ for any $(p',q') < (p,q)$. Note that $\alpha_g g(p) = \alpha_h h(q)$.
We say that $(e,f)$ is the \emph{$a$-block} if $a \in \{\zero,\one\}$ is a prefix $e$, and $(e,f)$ is a minimal solution of $g$ and $h$. The $\zero$-block and $\one$-block are also called \emph{letter blocks}.
Since $g$ and $h$ are marked, the process of the construction of a solution is deterministic in the following sense. For any comparable $g(r)$ and $h(s)$ such that $g(r)\neq h(s)$, there is at most one extension of either $r$ or $s$ which keeps the images comparable. This implies the following facts:      
\begin{itemize}
	\item each block (the starting block, the $\zero$-block and the $\one$-block) is unique if it exists;
	\item any solution in $\CC(g,h)$ has a unique decomposition into letter blocks.
\end{itemize}  

Similarly, we obtain the following characterization of morphisms without the starting block.

\begin{lemma} \label{le:no_starting_block}
	If the starting block does not exist, then $\CC(g_0,h_0)$ contains at most one minimal solution.
\end{lemma}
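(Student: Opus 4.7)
The plan is to prove the contrapositive: if $\CC(g_0,h_0)$ contains two distinct minimal solutions, the starting block must exist. Let $(r_1,s_1), (r_2,s_2)$ be two such solutions; by definition they are nonempty, and by (\ref{solutions_to_marked}) both pairs satisfy $\alpha g(r_i) = h(s_i)\alpha$. Injectivity of the marked morphisms $g$ and $h$ forces $r_1 \neq r_2$ and $s_1 \neq s_2$.

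The main technical tool is the standard consequence of markedness: $g(u) \leq_p g(v)$ implies $u \leq_p v$, and analogously for $h$. Combined with the marked equations, this yields that $r_1 \leq_p r_2$ is equivalent to $s_1 \leq_p s_2$: assuming $r_1 \leq_p r_2$, one gets $\alpha g(r_1) \leq_p \alpha g(r_2)$, whence $h(s_1)\alpha \leq_p h(s_2)\alpha$; a length comparison strips the common trailing $\alpha$ to give $h(s_1) \leq_p h(s_2)$, and thus $s_1 \leq_p s_2$.

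Suppose first that $r_1, r_2$ are prefix-comparable, say $r_1 \leq_p r_2$; then also $s_1 \leq_p s_2$. Canceling in the marked equations, one obtains $\alpha g(r_1^{-1} r_2) = h(s_1^{-1} s_2)\alpha$, so by (\ref{solutions_to_marked}) reversed, $(r_1^{-1} r_2, s_1^{-1} s_2)$ is itself a nontrivial solution in $\CC(g_0,h_0)$. This produces the nontrivial factorization $(r_2,s_2) = (r_1,s_1) \cdot (r_1^{-1} r_2, s_1^{-1} s_2)$ in the free monoid $\CC(g_0,h_0)$, contradicting the minimality of $(r_2,s_2)$.

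In the remaining case, $r_1, r_2$ (and hence $s_1, s_2$) are properly incomparable. Put $w = r_1 \wedge_p r_2$ and $w' = s_1 \wedge_p s_2$. By markedness, the longest common prefix of $g(r_1), g(r_2)$ is exactly $g(w)$, and that of $h(s_1), h(s_2)$ is exactly $h(w')$. Computing the longest common prefix of the equal words $\alpha g(r_i) = h(s_i)\alpha$ from both descriptions yields $\alpha g(w) = h(w')$, precisely the defining equation of a starting block, whose existence then follows by passing to a minimal such pair below $(w, w')$, contradicting the hypothesis. The main subtlety sits in this last step, namely verifying that the trailing $\alpha$'s on the right do not extend the common prefix: this holds because $h(s_1)$ and $h(s_2)$ already diverge at a position strictly inside them, by markedness applied to the first differing letter of $s_1$ and $s_2$.
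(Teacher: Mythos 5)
Your Case~1 (prefix-comparable first components) is correct, but Case~2 rests on a claim that you never prove and that is in fact false: you only establish the direction $r_1 \leq_p r_2 \Rightarrow s_1 \leq_p s_2$, and then assert the converse to conclude that incomparable $r_1,r_2$ force incomparable $s_1,s_2$. The proven direction works because $\alpha$ is \emph{prepended} on the $g$-side, so $g(r_1) \leq_p g(r_2)$ immediately gives $\alpha g(r_1) \leq_p \alpha g(r_2)$; the converse does not dualize, because on the $h$-side $\alpha$ is \emph{appended}, and $h(s_1) \leq_p h(s_2)$ does not imply $h(s_1)\alpha \leq_p h(s_2)\alpha$. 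The missing case is reachable: take $g_0(\zero)=\aaa$, $g_0(\one)=\aaa\bbb$, $h_0(\zero)=\aaa$, $h_0(\one)=\bbb$. Then $\alpha_g=\aaa$, $\alpha_h=\emp$, $\alpha=\aaa$, $g(\zero)=\aaa$, $g(\one)=\bbb\aaa$, $h=h_0$, and the pairs $(\zero,\zero)$ and $(\one,\zero\one)$ are two distinct \emph{minimal} solutions whose first components are incomparable while $\zero \leq_p \zero\one$. On such a configuration your proof asserts a false statement, and the longest-common-prefix computation breaks down: with comparable $s_1 \leq_p s_2$ one only gets that the common prefix of $h(s_1)\alpha$ and $h(s_2)\alpha$ equals $h(s_1)\beta$ for some prefix $\beta$ of $\alpha$, which is not an equation of the form $\alpha g(w) = h(w')$, so no starting block is produced by this route.

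In the example above the starting block does exist (it is $(\emp,\zero)$), so the uncovered case does not contradict the lemma, but closing it genuinely needs a further idea; your two cases do not exhaust the possibilities. For comparison, the paper's proof avoids the case split entirely: since there is no starting block, $\alpha \neq \emp$ (otherwise $(\emp,\emp)$ would be a starting block) and no intermediate pair satisfies $\alpha g(r') = h(q')$, so markedness makes the letter-by-letter construction of any solution of $\alpha g(r) = h(s)\alpha$ deterministic; consequently all solutions lie on a single chain, are pairwise comparable in \emph{both} components, and the first one reached is the unique minimal solution. Your Case~2a is morally this same determinism insight packaged via longest common prefixes, but to complete your contrapositive you must also handle the mixed case (incomparable $r$'s, comparable $s$'s), and the natural way to do so is precisely the deterministic-construction argument.
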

\begin{proof}
Note that for $\alpha = \emp$, the pair $(\emp,\emp)$ is the starting block.
Therefore, the word $\alpha$ is not empty, and since $g$ and $h$ are marked and there is no starting block, the words $r$ and $s$ satisfying
\[
 \alpha g(r) = h(s) \alpha
\]
are constructed deterministically, using the mentioned procedure, letter by letter and keeping the images prefix comparable.
If such solution exists, then the first one produced by this procedure is a prefix of any other nonempty solution, and using \eqref{solutions_to_marked}, it is thus the unique minimal solution of $\CC(g_0,h_0)$.
\end{proof}

Let us further suppose that the starting block $(p,q)$ exists. Then we have the following reduction of elements of $\CC(g_0,h_0)$ to elements of $\CC(g,h)$.

\begin{lemma} \label{le:char_solutions_a}
	If the starting block $(p,q)$ exists, and $(e,f)\in \CC(g_0,h_0)$, then $(p,q)$ is a prefix of $(ep,fq)$, and $(p^{-1}ep,q^{-1}fq) \in \CC(g,h)$.
\end{lemma}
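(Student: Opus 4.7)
The plan is to first derive the identity $\alpha g(ep) = h(fq)$, then invoke the minimality of the starting block together with the determinism of marked morphisms to obtain the prefix relation $(p, q) \leq_p (ep, fq)$, and finally to verify $(p^{-1}ep, q^{-1}fq) \in \CC(g, h)$ by a short morphism computation.

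For the first step, I combine $(e, f) \in \CC(g_0, h_0)$ with the equivalence \eqref{solutions_to_marked} to get $\alpha g(e) = h(f) \alpha$, and then use the starting block equation $\alpha g(p) = h(q)$:
\[
\alpha g(ep) = \alpha g(e) g(p) = h(f) \alpha g(p) = h(f) h(q) = h(fq).
\]

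For the prefix claim, I rely on the determinism of marked morphisms spelled out just before the statement. Since $g$ and $h$ are marked, any pair $(r, s)$ with $\alpha g(r) = h(s)$ arises by the unique letter-by-letter extension of $(\emp, \emp)$ that keeps $\alpha g(r')$ and $h(s')$ prefix-comparable. The starting block is the first pair in this chain where the two images coincide; any further such pair, in particular $(ep, fq)$, is reached by continuing the extension past $(p, q)$ and therefore has $(p, q)$ as a componentwise prefix. Having $p \leq_p ep$ and $q \leq_p fq$, the words $p^{-1}ep$ and $q^{-1}fq$ are defined, and the morphism property gives
\[
g(p^{-1}ep) = g(p)^{-1} g(ep) = g(p)^{-1} \alpha^{-1} h(fq) = h(q)^{-1} h(fq) = h(q^{-1}fq),
\]
where we used $\alpha g(p) = h(q)$ to rewrite $g(p)^{-1} \alpha^{-1}$ as $h(q)^{-1}$. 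This is exactly $(p^{-1}ep, q^{-1}fq) \in \CC(g, h)$.

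The main obstacle is the prefix claim: the intuitive determinism argument needs to be turned into a proper lemma, most naturally one stating that the starting block is a componentwise prefix of every pair $(r, s)$ with $\alpha g(r) = h(s)$. In Isabelle this likely requires an induction on $|r| + |s|$ that at each step uses markedness to force the unique extension, so this auxiliary fact is probably the key reusable ingredient. Once it is in place, the remainder of the proof is purely computational.
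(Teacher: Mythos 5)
Your proposal is correct and follows essentially the same route as the paper: the same computation $\alpha g(ep) = \alpha g(e)g(p) = h(f)\alpha g(p) = h(fq)$ via \eqref{solutions_to_marked}, the prefix claim from minimality of the starting block under the determinism of marked morphisms, and the same cancellation of $\alpha g(p) = h(q)$ to conclude $g(p^{-1}ep) = h(q^{-1}fq)$. The auxiliary fact you identify as the key reusable ingredient is indeed exactly what the formalization isolates (the lemma \isaelem{at{\isacharunderscore}most{\isacharunderscore}one{\isacharunderscore}pq} and the locale assumption \isaelem{pq{\isacharunderscore}minimal}), so your assessment of where the real work lies matches the paper as well.
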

\begin{proof}
As $(p,q)$ is the starting block, and using \eqref{solutions_to_marked}, we have $\alpha g(ep) = h(fq)$.
Thus, $(p,q)$ is a prefix of $(ep,fq)$. We may write $\alpha g(p) g(p^{-1}ep) = h(q)(q^{-1}fq)$ and obtain
\[
g \left( p^{-1}ep \right) = h \left( q^{-1}fq \right). 
\]

\end{proof}

This implies that each solution has a \emph{block decomposition} by which we mean the decomposition of $(p^{-1}ep,q^{-1}fq)$ into letter blocks.

However, the structure of $\CC(g_0,h_0)$ does not necessarily mirror the simple structure of $\CC(g,h)$. Although we may be tempted to conclude that $\CC(g_0,h_0)$ consist of elements $(pep^{-1},qfq^{-1})$ where $(e,f)\in \CC(g,h)$, the problem is that $(pep^{-1},qfq^{-1})$ is ill-defined if $(p,q)$ is not a suffix of $(pe,qf)$. Instead we have the following characterization:

\begin{lemma} \label{le:char_solutions}
\[
\CC(g_0,h_0) = \left \{\left(pep^{-1},qfq^{-1}\right) \mid  \text{$(e,f) \in \CC(g,h)$ and $(p,q)\leq_s (pe,qf)$} \right\}.
\]
\end{lemma}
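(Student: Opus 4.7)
The plan is to prove both inclusions, leveraging Lemma \ref{le:char_solutions_a} for one direction and the marked-version equivalence \eqref{solutions_to_marked} for the other.

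For the forward inclusion $(\subseteq)$, suppose $(E, F) \in \CC(g_0, h_0)$ (renaming to avoid a clash with the bound variables in the set comprehension). Lemma \ref{le:char_solutions_a} already provides almost everything: it gives $(p, q) \leq_p (Ep, Fq)$ together with $(e, f) := (p^{-1}Ep, q^{-1}Fq) \in \CC(g, h)$. Reading $Ep = pe$ backwards, the expression $E = pep^{-1}$ is well-formed since $p$ is trivially a suffix of $pe = Ep$; likewise $F = qfq^{-1}$. The side condition $(p, q) \leq_s (pe, qf)$ demanded by the right-hand set is immediate from these same equalities.

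For the reverse inclusion $(\supseteq)$, I would take $(e, f) \in \CC(g, h)$ with $(p, q) \leq_s (pe, qf)$, and set $E := pep^{-1}$ and $F := qfq^{-1}$; both expressions are well defined precisely because of the suffix hypothesis, and they satisfy $pe = Ep$ and $qf = Fq$. The goal $g_0(E) = h_0(F)$ is reformulated via \eqref{solutions_to_marked} as $\alpha g(E) = h(F) \alpha$. Right-multiplying both sides by $g(p)$ and using the starting-block identity $\alpha g(p) = h(q)$ together with the coincidence $g(e) = h(f)$, I would compute
\begin{align*}
\alpha g(E) \cdot g(p) &= \alpha g(Ep) = \alpha g(pe) = \alpha g(p) g(e) = h(q) g(e), \\
h(F) \alpha \cdot g(p) &= h(F) h(q) = h(Fq) = h(qf) = h(q) h(f) = h(q) g(e),
\end{align*}
and conclude by right-cancellation in the free monoid.

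The main obstacle is essentially clerical: keeping track of when each partial-inverse expression is well defined, and invoking \eqref{solutions_to_marked} at the correct moments to pass between the $g_0, h_0$ world and its marked counterpart. Once one has license to right-multiply by $g(p)$, the algebraic core collapses to the single chain of equalities displayed above.
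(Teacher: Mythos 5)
Your proposal is correct and takes essentially the same route as the paper: the inclusion $\subseteq$ is delegated to Lemma~\ref{le:char_solutions_a}, and the inclusion $\supseteq$ is obtained by verifying $\alpha\, g(pep^{-1}) = h(qfq^{-1})\,\alpha$ and then appealing to \eqref{solutions_to_marked}. Your explicit computation (right-multiplying by $g(p)$, using $\alpha g(p)=h(q)$ and $g(e)=h(f)$, then cancelling) simply fills in the step the paper leaves implicit in the phrase ``using the properties of the starting block.''
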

\begin{proof}
The inclusion $\subseteq$ is Lemma~\ref{le:char_solutions_a}.

To see the inclusion $\supseteq$, we first verify, using the properties of the starting block, that $g(e) = h(f)$ implies,  
\[
\alpha g(pep^{-1}) = h(qfq^{-1})\alpha\,.
\]
The claim now follows from \eqref{solutions_to_marked}.
\end{proof}

\subsection*{Formalization of minimal solutions and blocks}

The definition of a minimal solution (for a morphism \isaelem{g}, word \isaelem{r}, morphism \isaelem{h}, and word \isaelem{s}, in this order) is formalized in the following way, introducing a useful short notation \isaelem{g r {\isacharequal}\isactrlsub m h s}:

\begin{isaframe}
\isacommand{definition}\isamarkupfalse%
\ MinimalSolution\ {\isacharcolon}{\isacharcolon}\ {\isachardoublequoteopen}{\isacharparenleft}binA\ list\ {\isasymRightarrow}\ {\isacharprime}a\ list{\isacharparenright}\ {\isasymRightarrow}\ binA\ list\ {\isasymRightarrow}\ {\isacharparenleft}binA\ list\ {\isasymRightarrow}\ {\isacharprime}a\ list{\isacharparenright}\ {\isasymRightarrow}\ \ binA\ list\ {\isasymRightarrow}\ bool{\isachardoublequoteclose}\ {\isacharparenleft}{\isachardoublequoteopen}{\isacharparenleft}{\isacharunderscore}\ {\isacharunderscore}{\isacharparenright}\ {\isacharequal}\isactrlsub m\ {\isacharparenleft}{\isacharunderscore}\ {\isacharunderscore}{\isacharparenright}{\isachardoublequoteclose}\ {\isacharbrackleft}{\isadigit{8}}{\isadigit{0}}{\isacharcomma}{\isadigit{8}}{\isadigit{0}}{\isacharcomma}{\isadigit{8}}{\isadigit{0}}{\isacharcomma}{\isadigit{8}}{\isadigit{0}}{\isacharbrackright}\ {\isadigit{5}}{\isadigit{1}}\ {\isacharparenright}\isanewline
\ \ \isakeyword{where}\ minsoldef{\isacharcolon}\ \ {\isachardoublequoteopen}MinimalSolution\ g\ r\ h\ s\ {\isasymequiv}\ r\ {\isasymnoteq}\ {\isasymepsilon}\ {\isasymand}\ s\ {\isasymnoteq}\ {\isasymepsilon}\ {\isasymand}\ g\ r\ {\isacharequal}\ h\ s\ {\isasymand}\ {\isacharparenleft}{\isasymforall}\ r{\isacharprime}\ s{\isacharprime}{\isachardot}\ r{\isacharprime}\ {\isasymle}np\ r\ {\isasymand}\ s{\isacharprime}\ {\isasymle}p\ s\ {\isasymand}\ g\ r{\isacharprime}\ {\isacharequal}\ h\ s{\isacharprime}\ {\isasymlongrightarrow}\ r{\isacharprime}\ {\isacharequal}\ r\ {\isasymand}\ s{\isacharprime}\ {\isacharequal}\ s{\isacharparenright}{\isachardoublequoteclose}%
\isacomment{{\isasymle}np stands for nonempty prefix}
\end{isaframe}

Formalization of Lemma~\ref{le:no_starting_block}, dealing with the case of no starting block, is rewritten and proven as:

\begin{isaframe}
\isacommand{lemma}\isamarkupfalse%
\ no{\isacharunderscore}pq{\isacharunderscore}one{\isacharunderscore}minimal{\isacharcolon}\ \isanewline
\ \ \isakeyword{assumes}\ {\isachardoublequoteopen}{\isasymAnd}\ p\ q{\isachardot}\ {\isasymalpha}\ {\isasymcdot}\ g\ p\ {\isasymnoteq}\ h\ q{\isachardoublequoteclose}\isanewline
\ \ \ \ \isakeyword{and}\ {\isachardoublequoteopen}g\isactrlsub {\isadigit{0}}\ r\ {\isacharequal}\isactrlsub m\ h\isactrlsub {\isadigit{0}}\ s{\isachardoublequoteclose}\isanewline
\ \ \ \ \isakeyword{and}\ {\isachardoublequoteopen}g\isactrlsub {\isadigit{0}}\ r{\isacharprime}\ {\isacharequal}\isactrlsub m\ h\isactrlsub {\isadigit{0}}\ s{\isacharprime}{\isachardoublequoteclose}\isanewline
\ \ \isakeyword{shows}\ {\isachardoublequoteopen}{\isacharparenleft}r{\isacharcomma}s{\isacharparenright}\ {\isacharequal}\ {\isacharparenleft}r{\isacharprime}{\isacharcomma}s{\isacharprime}{\isacharparenright}{\isachardoublequoteclose}%
\end{isaframe}

The fact that there is at most one starting block is stated (and proven) in the second basic way of writing assumptions and claims in Isabelle using implications \isaelem{\isasymLongrightarrow}.

\begin{isaframe}
\isacommand{lemma}\isamarkupfalse%
\ at{\isacharunderscore}most{\isacharunderscore}one{\isacharunderscore}pq{\isacharcolon}\ {\isachardoublequoteopen}z\ {\isasymnoteq}\ {\isasymepsilon}\ {\isasymLongrightarrow}\ z\ {\isasymcdot}\ g\ r\ {\isacharequal}\ h\ s\ {\isasymLongrightarrow}\ {\isasymexists}p\ q{\isachardot}\ z\ {\isasymcdot}\ g\ p\ {\isacharequal}\ h\ q\ {\isasymand}\ {\isacharparenleft}{\isasymforall}\ r\ s{\isachardot}\ z\ {\isasymcdot}\ g\ r\ {\isacharequal}\ h\ s\ {\isasymlongrightarrow}\ {\isacharparenleft}p\ {\isasymle}p\ r\ {\isasymand}\ q\ {\isasymle}p\ s{\isacharparenright}{\isacharparenright}{\isachardoublequoteclose}%
\end{isaframe}

Note that the lemma has two assumptions, namely $z \neq \emp$ and $z g(r) = h(s)$, and the conclusion is a complicated logical formula, which itself contains an implication which is nevertheless written as {\isasymlongrightarrow}. This illustrates two levels on which the formalization operates, and which reflect the composed name ``Isabelle/HOL'' of the proof assistant we use. While the formula of the conclusion is formulated in the \emph{object logic}, namely HOL (see \cite{PaulsonNW-FAC19}), the implication \isaelem{\isasymLongrightarrow} is part of the \emph{metalogic} proper to Isabelle, called \emph{Pure}. This metalogic is best seen as an abbreviation for the natural language construction ``if \dots then''. That is, the whole claim should be read as: 
``If $z \neq \emp$, and if  $z g(r) = h(s)$, then the following formula holds \dots.''

Finally, the assumption of existence of such a starting pair is realized using a locale, with two additional assumptions called \isaelem{pq} and \isaelem{pq{\isacharunderscore}minimal}.
Lemma~\ref{le:char_solutions} is formalized within this locale.

\begin{isaframe}
\isacommand{locale}\isamarkupfalse%
\ binary{\isacharunderscore}intersection{\isacharunderscore}pq\ {\isacharequal}\ binary{\isacharunderscore}intersection{\isacharunderscore}\isanewline
\ \ \isakeyword{for}\ p\ q\ {\isacharplus}\isanewline
\ \ \isakeyword{assumes}\ \isanewline
\ \ \ \ pq{\isacharcolon}\ {\isachardoublequoteopen}{\isasymalpha}\ {\isasymcdot}\ g\ p\ {\isacharequal}\ h\ q{\isachardoublequoteclose}\isanewline
\ \ \ \ \isakeyword{and}\ \ \ pq{\isacharunderscore}minimal{\isacharcolon}\ {\isachardoublequoteopen}{\isasymalpha}\ {\isasymcdot}\ g\ p{\isacharprime}\ {\isacharequal}\ h\ q{\isacharprime}\ {\isasymLongrightarrow}\ p\ {\isasymle}p\ p{\isacharprime}\ {\isasymand}\ q\ {\isasymle}p\ q{\isacharprime}{\isachardoublequoteclose}
\isanewline
\isakeyword{begin}\isanewline
\isacommand{lemma}\isamarkupfalse%
\ char{\isacharunderscore}solutions{\isacharcolon}\ {\isachardoublequoteopen}g\isactrlsub {\isadigit{0}}\ r\ {\isacharequal}\ h\isactrlsub {\isadigit{0}}\ s\ {\isasymlongleftrightarrow}\ {\isacharparenleft}{\isasymexists}\ e\ f{\isachardot}\ g\ e\ {\isacharequal}\ h\ f\ {\isasymand}\ p\ {\isasymle}s\ p\ {\isasymcdot}\ e\ {\isasymand}\ q\ {\isasymle}s\ q\ {\isasymcdot}\ f\ {\isasymand}\ r\ {\isacharequal}\ {\isacharparenleft}p{\isasymcdot}e{\isacharparenright}{\isasymcdot}p{\isasyminverse}\ {\isasymand}\ s\ {\isacharequal}\ {\isacharparenleft}q{\isasymcdot}f{\isacharparenright}{\isasymcdot}q{\isasyminverse}{\isacharparenright}{\isachardoublequoteclose}%
\isacomment{Lemma~\ref{le:char_solutions}}
\isanewline\isakeyword{end}
\end{isaframe}

\subsection{Letter blocks as morphisms}

Since the elements of $\CC(g,h)$ decompose into letter blocks, we define morphisms $\ee$ and $\ff$ on $A^*$ where $(\ee(a),\ff(a))$ is the $a$-block. 
The morphisms are partial if some letter block does not exist. 
The characterization is finally reduced to characterizing the set $T$ satisfying the condition of Lemma~\ref{le:char_solutions}. Namely we set
\[
T = \left\{ \tau \in A^* \mid (p,q) \leq_s (p \ee (\tau),q \ff (\tau)) \right\}.
 \]

\begin{lemma}\label{le:T_prefix_code} 
If $\tau_1,\tau_1\tau_2 \in T$, then $\tau_2 \in T$.
\end{lemma}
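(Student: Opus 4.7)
The plan is to prove this by unwinding the definition of $T$ componentwise and using only the morphism property of $\ee,\ff$ together with a standard fact about suffixes of a common word. Since $T$ is defined via a pair condition, I will prove the two coordinates separately; by symmetry it suffices to write out the argument for the first coordinate ($p$ and $\ee$), and the same reasoning applies verbatim to ($q$ and $\ff$).

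First I would unpack the hypotheses. From $\tau_1 \in T$ I get $p \leq_s p\ee(\tau_1)$, so there exists $z$ with $p\ee(\tau_1) = zp$. From $\tau_1\tau_2 \in T$, using $\ee(\tau_1\tau_2) = \ee(\tau_1)\ee(\tau_2)$, I get that $p$ is a suffix of
\[
p\ee(\tau_1)\ee(\tau_2) \;=\; z\,p\,\ee(\tau_2).
\]
My goal is $p \leq_s p\ee(\tau_2)$.

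The key step is to observe that $p\ee(\tau_2)$ is itself a suffix of $z\,p\,\ee(\tau_2)$ and has length $|p|+|\ee(\tau_2)| \geq |p|$. Since two suffixes of the same word are suffix-comparable, and the shorter one ($p$) is a suffix of the longer one ($p\ee(\tau_2)$), we conclude $p \leq_s p\ee(\tau_2)$, as required. The analogous computation with $q$ and $\ff$ gives $q \leq_s q\ff(\tau_2)$. Combining both coordinates yields $\tau_2 \in T$.

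I do not expect a real obstacle here: the only ingredients are the morphism property of $\ee$ and $\ff$ (which is just definitional once $\ee(\tau_2)$ even makes sense, i.e.\ once the letter blocks used in $\tau_2$ exist, which is implicit in $\tau_1\tau_2 \in T$), and the elementary lemma that of two suffixes of the same word the shorter is a suffix of the longer. The proof is genuinely short; the substance of Lemma~\ref{le:T_prefix_code} is the structural consequence (that $T$ is closed under removing a prefix), not the verification itself.
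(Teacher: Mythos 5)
Your proof is correct and takes essentially the same route as the paper's: both arguments show that $p$ and $p\,\ee(\tau_2)$ are suffixes of the common word $p\,\ee(\tau_1)\ee(\tau_2)$ and conclude by comparability of suffixes (the shorter being a suffix of the longer), then repeat the argument for $q$ and $\ff$. The only cosmetic difference is that you make the suffix-comparability step explicit, which the paper leaves implicit.
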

\begin{proof}
Since $p$ is a suffix of $pe(\tau_1)$, we have that $pe(\tau_2)$ is a suffix of $pe(\tau_1)e(\tau_2)$.
Since $p$ is also a suffix of $pe(\tau_1)e(\tau_2)$, we deduce that $p$ is a suffix of $pe(\tau_2)$.
Similarly, we obtain that $q$ is a suffix of $qf(\tau_2)$.
Hence $\tau_2\in T$.
\end{proof}

We also have the following simple property.
\begin{lemma}\label{le:one_block} 
	If $c^i \in T$, where $i$ is positive and  $c \in \{\zero,\one\}$, then also $c\in T$.
\end{lemma}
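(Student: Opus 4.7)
The hypothesis $c^i \in T$ unfolds, via the morphism property of $\ee$ and $\ff$, to the two suffix conditions $p \leq_s pe^i$ and $q \leq_s qf^i$, where $e = \ee(c)$ and $f = \ff(c)$. Since the two coordinates are entirely symmetric, it suffices to prove the purely word-combinatorial statement: for any words $p, e$ and any integer $i \geq 1$, if $p \leq_s pe^i$ then $p \leq_s pe$; the case of $q, f$ is identical. Applying this to both coordinates yields $(p,q) \leq_s (pe, qf)$, i.e.\ $c \in T$.

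My plan is to invoke a classical conjugation lemma of Lyndon--Schützenberger type. From $p \leq_s pe^i$ one has $pe^i = Xp$ for some word $X$ with $|X| = i|e|$, and the lemma supplies words $x, y$ and $k \geq 0$ such that $X = xy$, $e^i = yx$, and $p = (xy)^k x$. The next step is to read the factorization $e^i = yx$ off $e$ explicitly: writing $|y| = m|e| + r$ with $0 \leq r < |e|$ and splitting $e = e_a e_b$ with $|e_a| = r$ (where $e_a = \emp$ when $r = 0$), one obtains $y = e^m e_a$ and $x = e_b e^{i-m-1}$. Setting $e' = e_b e_a$, a rotation of $e$ by $r$ positions, one checks $xy = (e_b e_a)^i = (e')^i$.

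The crucial identity is $xe = e' x$, verified by a one-line direct computation: both sides expand to $e_b e^{i-m}$. From here,
\[
pe \;=\; (xy)^k\, x e \;=\; (e')^{ik}\, e' x \;=\; e'\, (e')^{ik} x \;=\; e'\, p,
\]
so $p$ is a suffix of $pe$, which finishes the reduced statement and hence the lemma.

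The main obstacle is the middle step: given an arbitrary factorization $yx = e^i$, extracting the rotated factor $e'$ of $e$ and verifying $xe = e'x$. In the Isabelle formalization this will presumably unfold to several technical lemmas on interactions of suffixes, prefixes and powers of a word; once those are in place, the final assembly is essentially a line of rewriting.
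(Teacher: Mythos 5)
Your proof is correct, but it follows a genuinely more explicit route than the paper's. After the same reduction to the one-coordinate statement ($p \leq_s p\,\ee(c)^i$ implies $p \leq_s p\,\ee(c)$, and symmetrically for $q$ and $\ff$), the paper disposes of it in two lines by factoring through powers: $p \leq_s p\,e^i$ (writing $e = \ee(c)$) forces $p$ to be a suffix of a sufficiently large power $e^n$, and then $p$ and $p\,e$ are both suffixes of $e^{n+1}$, hence suffix-comparable, giving $p \leq_s p\,e$. You instead apply the Lyndon--Sch\"utzenberger conjugation lemma to $p\,e^i = Xp$, extract the factorization $p = (xy)^k x$ with $e^i = yx$, identify the rotation $e' = e_b e_a$ of $e$, and verify the key identity $x e = e' x$, whence $p\,e = e'\,p$. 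Both arguments rest on the same underlying conjugacy phenomenon; yours buys an explicit witness (the rotation $e'$ satisfying $p\,e = e'p$), at the price of the bookkeeping in your middle step --- precisely the work that the paper's suffix-of-a-power equivalence packages away, and which in the formalization is available as library lemmas. One boundary case in your write-up needs a patch: when $y = e^i$ (i.e.\ $r = 0$ and $m = i$), your formula $x = e_b\,e^{i-m-1}$ is ill-formed; but then $x = \emp$, so $p = (e^i)^k$ commutes with $e$ and the conclusion is immediate.
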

\begin{proof}
If $p$ is a suffix of $pc^i$, then $p$ is a suffix of $c^*$.
It implies that $p$ is a suffix of $pc$. Similarly, if $q$ is a suffix of $qc^i$, it is a suffix of $qc$.
\end{proof}

We point out three more auxiliary arguments.

\begin{lemma}\label{le:q_pref}	
	If $\zeta\one\zero^i \in T$, with $0 < i$,  then
	\begin{enumerate}[label=(\arabic*)]
		\item \label{it1:q_pref} $p$ is a proper suffix of $\ee(\zero)$. 
		\item \label{it2:q_pref} $\alpha_g g(p) \leq_s g\left(\ee(\one\zero^i)\right)$.
		\item \label{it3:q_pref} $q \leq_s \ff(\one\zero^i)$.
	\end{enumerate}
%
\end{lemma}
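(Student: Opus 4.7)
The strategy is to establish (1) first, and then derive (3) and (2) from (1) together with the starting-block equation $\alpha g(p) = h(q)$.

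For (1), I argue by contradiction. Assume $|p| \geq |\ee(\zero)|$. The hypothesis $\zeta\one\zero^i \in T$ unfolds to $p \leq_s p\,\ee(\zeta)\,\ee(\one)\,\ee(\zero)^i$, and since $i \geq 1$, the last $|\ee(\zero)|$ characters of this word lie entirely inside the final $\ee(\zero)$-block; hence $p = P\,\ee(\zero)$ for some word $P$. Substituting into the starting-block equation and using $g(\ee(\zero)) = h(\ff(\zero))$ yields $\alpha g(P)\,h(\ff(\zero)) = h(q)$. The critical step is a descent: invoking the marked property of $h$ (which makes $h$ a prefix code with unique left-to-right factorization) together with the minimality of $(p,q)$, I conclude that the factorization of $h(q)$ aligns at position $|\alpha g(P)|$, so $q = Q\,\ff(\zero)$ for some $Q$ with $\alpha g(P) = h(Q)$. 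The pair $(P,Q)$ is then strictly smaller than $(p,q)$ and satisfies the starting-block equation, contradicting minimality. It follows that $|p| < |\ee(\zero)|$, and from $p \leq_s \ee(\zero)^i$ one obtains that $p$ is a proper suffix of $\ee(\zero)$.

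For (3), I use (1) to get $|g(p)| < |g(\ee(\zero))|$. Combining this length bound with the suffix condition $q \leq_s q\,\ff(\zeta)\,\ff(\one)\,\ff(\zero)^i$ and applying an analogous peeling-off argument on the $h$-side (exploiting the block identity $h(\ff(\one\zero^i)) = g(\ee(\one\zero^i))$ and the starting-block equation), I deduce that $|q| \leq |\ff(\one\zero^i)|$ and therefore $q \leq_s \ff(\one\zero^i)$.

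For (2), I combine the identity $\alpha_g g(p) = \alpha_h h(q)$ (which follows from $\alpha_g = \alpha_h\,\alpha$ and the starting-block equation) with (3) and the block identity $g(\ee(\one\zero^i)) = h(\ff(\one\zero^i))$. A length check using (1) verifies $|\alpha_g g(p)| \leq |g(\ee(\one\zero^i))|$, and the suffix relation then follows by comparing the common trailing factor $h(q)$.

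The main obstacle is the descent argument in (1): rigorously justifying that the $h$-factorization of $h(q)$ aligns at the position separating $\alpha g(P)$ from $h(\ff(\zero))$ requires a careful argument combining the marked structure of $h$ with the minimality of the starting block, ruling out the possibility of a misaligned occurrence of $h(\ff(\zero))$ as a suffix.
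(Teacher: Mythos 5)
Your overall architecture matches the paper's: prove \ref{it1:q_pref} by contradiction via a descent that peels one letter block off the starting block, then get the other two items from suffix-comparability plus length estimates (you do \ref{it3:q_pref} before \ref{it2:q_pref}, the paper does the reverse; that difference is immaterial). The genuine gap is exactly the step you yourself flag as ``the main obstacle'': the alignment in the descent. Markedness of $h$ is a \emph{prefix} property (distinct first letters of $h(\zero)$ and $h(\one)$); it gives left-to-right determinism and says nothing about how $h(\ff(\zero))$ can occur as a \emph{suffix} of $h(q)$. Misaligned suffix occurrences are perfectly possible for marked morphisms --- take $h(\zero)=\aaa\aaa$, $h(\one)=\bbb\aaa\aaa$, where $h(\one)$ ends with $h(\zero)$ although $\one$ does not end with $\zero$ --- and the minimality of $(p,q)$, which is a statement about proper \emph{prefixes} of $(p,q)$ failing the equation $\alpha g(p')=h(q')$, has no visible interaction with suffixes of $q$. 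So ``markedness plus minimality'' is the wrong tool, and as written your descent does not go through.

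The missing ingredient is the part of the hypothesis $\zeta\one\zero^i\in T$ that you never use in step \ref{it1:q_pref}: the $q$-component, $q \leq_s q\,\ff(\zeta\one\zero^i)$. Since $i\geq 1$, both $q$ and $\ff(\zero)$ are suffixes of the single word $q\,\ff(\zeta\one\zero^i)$, hence they are suffix-comparable; no factorization or alignment argument is needed at all. The case that $q$ is a proper suffix of $\ff(\zero)$ is then excluded by lengths, because $h(q)=\alpha g(P)\,h(\ff(\zero))$ gives $\abs{h(q)}\geq\abs{h(\ff(\zero))}$. Hence $\ff(\zero)\leq_s q$; writing $q=Q\,\ff(\zero)$ and cancelling $g(\ee(\zero))=h(\ff(\zero))$ on the right of $\alpha g(P)\,g(\ee(\zero))=h(Q)\,h(\ff(\zero))$ yields $\alpha g(P)=h(Q)$, and minimality of the starting block gives the contradiction. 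This is precisely how the paper closes the step. A smaller instance of the same gloss occurs in your step \ref{it2:q_pref}: ``comparing the common trailing factor $h(q)$'' silently needs that $\alpha_h$ is suffix-comparable with every word $h(w)$ (which follows from $\alpha_h h(w) = h_0(w)\alpha_h$); a length check alone never produces a suffix relation.
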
	
\begin{proof}
	1. If $p$ is not a proper suffix of $\ee(\zero)$, then $p \leq_s p\,\ee(\zeta\one\zero^i)$ implies that $\ee(\zero)$ is a suffix of $p$. From $\alpha g(\ee(p))=h(\ff(q))$, $g(\ee(\zero))=h(\ff(\zero))$ and  $q \leq_s q\,\ff(\zeta\one\zero^i)$ we deduce that $(\ee(\zero),\ff(\zero))$ is a suffix of $(p,q)$, contradictiong the minimality of $(p,q)$. 

	2. Recall that $\alpha_g$ is suffix comparable with any $g(w)$, since $g(w) = \alpha_g^{-1}g_0(w)\alpha_g$. This implies that $\alpha_g g(p)$ and  $g\left(\ee(\one\zero^i)\right)$ are suffix comparable.  It is therefore enough to show that $\alpha_g g(p)$ is shorter than $g\left(\ee(\one\zero^i)\right)$.
	From \ref{it1:q_pref} we have
	\[
	\abs{g\left(\ee(\one\zero^i)\zero\right)} + \abs{g(p)} \leq \abs{g\left(\ee(\one\zero^{i-1})\right)},  
	\]
	and the claim follows from $\abs{\alpha_g} < \abs {g(\one\zero)}$.
	
	3. If $q$ is not a suffix of $\ff(\one\zero^i)$, then $\ff(\one\zero^i)$ is a proper suffix of $q$ since $q \leq_s q\,\ff(\one\zero^i)$. This contradicts \ref{it2:q_pref} in view of $\alpha g(\ee(p))=h(\ff(q))$ and $g(\ee(\one\zero^i))=h(\ff(\one\zero^i))$.
	\end{proof}

The most challenging part of the proof is the following lemma. It constitutes the real core of the proof.

\begin{lemma} \label{le:last_block} 
	If $\zeta c\in T$ for some $c\in\{\zero,\one\}$ and $\zeta\in\gen{\{\zero,\one\}}$, then also $c\in T$.
\end{lemma}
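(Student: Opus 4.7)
\emph{Reduction.} First, if $\zeta \in \{c\}^*$, then $\zeta c = c^i$ for some $i \geq 1$, and $c \in T$ follows directly from Lemma~\ref{le:one_block}. Otherwise $\zeta$ contains the letter other than $c$. By the symmetry between $\zero$ and $\one$ (the analogue of Lemma~\ref{le:q_pref} with the two letters swapped holds by the same proof), I may assume $c=\zero$, and write $\zeta c = \mu \one \zero^k$ where $k \geq 1$ is the length of the maximal trailing run of $\zero$'s in $\zeta c$. Applying Lemma~\ref{le:q_pref} with these parameters yields: (i) $p$ is a proper suffix of $\ee(\zero)$; (ii) $\alpha_g g(p) \leq_s g(\ee(\one\zero^k))$; (iii) $q \leq_s \ff(\one)\ff(\zero)^k$.

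\emph{The easy half.} The condition $p \leq_s p\ee(\zero)$ is immediate from (i): writing $\ee(\zero)=rp$, the word $p\ee(\zero)=prp$ has $p$ as a suffix. To establish $q \leq_s q\ff(\zero)$, I split on the length of $q$. If $|q| \leq k|\ff(\zero)|$ (Case~A), (iii) forces $q \leq_s \ff(\zero)^k$, because the last $k|\ff(\zero)|$ characters of $\ff(\one)\ff(\zero)^k$ are precisely $\ff(\zero)^k$. Then both $q$ and $q\ff(\zero)$ are suffixes of $\ff(\zero)^{k+1}$, and the shorter $q$ is a suffix of the longer $q\ff(\zero)$, as required.

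\emph{The hard case.} In the complementary Case~B, $|q| > k|\ff(\zero)|$, so we may write $q = q'\ff(\zero)^k$ with $\emp \neq q' \leq_s \ff(\one)$. My plan here is to combine the starting-block identity $\alpha g(p) = h(q) = h(q')\,g(\ee(\zero))^k$ with (ii); after cancelling the common suffix $g(\ee(\zero))^k = h(\ff(\zero))^k$, the relation (ii) specialises to $\alpha_h h(q') \leq_s h(\ff(\one))$. Combining this with the minimality of the starting block $(p,q)$ and with the marked structure of $g$ and $h$, the aim is to force $q' \leq_s \ff(\zero)$; together with $q = q'\ff(\zero)^k$ this yields $q \leq_s \ff(\zero)^{k+1}$ and therefore $q \leq_s q\ff(\zero)$ exactly as in Case~A. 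The main obstacle is precisely this last step in Case~B: while Case~A is a clean combinatorial length argument, extracting from marked-morphism structure and starting-block minimality the nontrivial suffix relation between $q'$ and $\ff(\zero)$ is genuinely subtle, and it is here that the proof becomes the ``real core'' flagged by the authors.
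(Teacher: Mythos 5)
Your reduction and the easy half are fine, and they track the paper's own setup: the case $\zeta\in\{c\}^*$ via Lemma~\ref{le:one_block}, the decomposition $\zeta c=\mu\one\zero^k$, the appeal to Lemma~\ref{le:q_pref}, the condition $p\leq_s p\,\ee(\zero)$ from item~\ref{it1:q_pref}, and the clean length argument of your Case~A (which the paper subsumes inside a proof by contradiction). The gap is Case~B, and it is not a small one: what you yourself call ``the main obstacle'' is exactly the mathematical content of the lemma, and your proposal contains no argument for it, only the statement of an aim. Saying that minimality of $(p,q)$ and markedness of $g$ and $h$ should ``force'' $q'\leq_s\ff(\zero)$ is a restatement of the goal, not a proof step. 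Moreover that target is stated too strongly: nothing guarantees $\abs{q'}\leq\abs{\ff(\zero)}$ (the paper's third example has $\ff(\zero)$ a proper suffix of $\ff(\one)$, so a suffix $q'$ of $\ff(\one)$ can be longer than $\ff(\zero)$); the correct target is $q'\leq_s\ff(\zero)^*$, equivalently $q\leq_s\ff(\zero)^*$.

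For comparison, here is what the paper actually does at this point, arguing by contradiction from the assumption that $(p,q)$ is not a suffix of $(\ee(\zero)^*,\ff(\zero)^*)$. First, $\abs{\alpha}<\abs{g(\zero\one)}$ together with item~\ref{it1:q_pref} of Lemma~\ref{le:q_pref} gives $\ee(\zero)=\zero^m p$ with $m\geq 1$. Second, it introduces $\alpha_\ff=\ff(\zero)^*\wedge_s\ff(\one)^*$ with its marking letters $c_0,c_1$, and $\alpha_h=h(\zero)^*\wedge_s h(\one)^*$, and proves two length inequalities: $\abs{\alpha}\leq\abs{h(\alpha_\ff)}+\abs{\alpha_h}$, and, using that the contradiction hypothesis forces $c_1\alpha_\ff\ff(\zero)^i\leq_s q$, also $\abs{h(c_1\alpha_\ff)}+\abs{g(\zero)}\leq\abs{\alpha}$. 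These combine to $\abs{h(c_1)}+\abs{g(\zero)}\leq\min(\abs{\alpha},\abs{\alpha_h})$, whereupon the Periodicity Lemma (Lemma~\ref{pl}) yields that $g(\zero)$ and $h(c_1)$ commute, and markedness then gives $\ff(\zero)\in c_1^*$, contradicting $c_0\alpha_\ff\leq_s\ff(\zero)^*$. This periodicity/commutation step is where the hypothesis that $g$ and $h$ are marked binary codes actually does its work, and nothing in your plan substitutes for it; your derived relation $\alpha_h h(q')\leq_s h(\ff(\one))$ is correct but by itself far too weak. As it stands, the proposal establishes only the easy half of the lemma.
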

\begin{proof}
	Without loss of generality, let $c=\zero$. The claim follows from Lemma \ref{le:one_block} if $\tau\in \zero^*$. Let therefore $\tau=\zeta'\one\zero^i$, and
	assume  \[(p,q)\leq_s \left(pe(\zeta')\ee (\one) \ee(\zero)^i,q \ff(\zeta')\ff(\one)\ff(\zero^i)\,\right).\]
	We want to show that $(p,q)$ is a suffix of $(p \ee(\zero),q \ff(\zero))$. This is equivalent to showing that $(p,q)$ is a suffix of $(\ee(\zero)^*,\ff(\zero)^*)$. Assume the contrary.
	
	The equality $\alpha g(p)=h(q)$ and Lemma \ref{le:q_pref}~\ref{it1:q_pref} imply that $g( \ee(\zero)p^{-1})$ is a suffix of $\alpha$. 
	Since $|\alpha|<|g(\zero\one)|$, we have that $\ee(\zero)p^{-1}$ is  $\zero^m$ for some $m\geq 1$.
	
	Let $\alpha_\ff = \ff(\zero)^*\wedge_s \ff(\one)^*$, and let $c_0$ and $c_1$ be distinct letters such that $c_0\alpha_\ff\leq_s \ff(\zero)^*$ and $c_1\alpha_\ff\leq_s \ff (\one)^*$. 
	Let, moreover, $\alpha_h = h(\zero)^*\wedge _s h(\one)^*$. Then $\alpha_hh(\alpha_\ff)$ is the longest common suffix of $h(\ff(\zero)^*)$ and $h(\ff(\one)^*)$. Since $\alpha$ is a suffix of both $g(\ee(\zero)^*)$ and $g(\ee(\one)^*)$, we deduce that $\alpha$ is a suffix of $\alpha_hh(\alpha_\ff)$ and hence 
	\[
	\abs \alpha\leq \abs{h(\alpha_\ff)} + \abs{\alpha_h} \,.
	\] 
	Since $q$ is a suffix of $q\ff(\zeta')\ff(\one)\ff(\zero)^i$ and not a suffix of $\ff(\zero)^*$, we obtain that $c_1\alpha_\ff\ff(\zero)^i$ is a suffix of $q$. From $\alpha g(p)=h(q)$ and $\ee(\zero)=\zero^mp$, we now have
	$
	h(c_1\alpha_\ff\ff(\zero)^{i-1}) \leq_s \alpha g(\zero^m)^{-1}, 
	$
	which yields 
	\[
	|h(c_1\alpha_\ff)| + |g(a)| \leq \abs \alpha.
	\]
	The two inequalities above imply that $\abs{h(c_1)}+\abs{g(\zero)} \leq \abs\alpha$ and $\abs{h(c_1)}+\abs{g(\zero)} \leq \abs{\alpha_h}$. 
	Since $\alpha_h$ is a suffix of $h(c_1)^*$, $\alpha$ is a suffix of $h(\zero)^*$ and $\alpha_h$ and $\alpha$ are suffix comparable, the Periodicity lemma implies that $g(\zero)$ and $h(c_1)$ commute (see Lemma \ref{pl}). Since both $g$ and $h$ are marked, we obtain that $\ff(\zero)\in c_1^*$ which contradicts $c_0\alpha_\ff\leq_s \ff(\zero)^*$.
	\end{proof}
We can now have characterize the slightly surprising possibility when the intersection of two free binary monoids is infinitely generated. This happens when both letter blocks exist, but one of the singletons is not in $T$. By symmetry, we shall therefore suppose, in the following classification lemma, that $\zero \in T$ and $\one \notin T$. 
\begin{lemma}\label{le:Tform}
    Assume that both letter blocks exist, $\zero \in T$ and $\one \notin T$. 
	Then $\tau$ is a minimal element of $T$ if and only if
	$\tau = \zero$ or
	\[
	\left\{ 
	\begin{array}{l}
	\one\text{ is a prefix of } \tau, \text{ and } \\ \zero^{t+1} \text{ is its suffix, and} \\ \text{there is no other occurrence of  } \zero^{t+1}  \text{ in } \tau,
	\end{array}
		\right.	
	\]
	where $t$ is the least non negative integer such that $q\leq_s q\, \ff(\one \zero^{t+1})$.	
\end{lemma}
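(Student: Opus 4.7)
The plan is to prove the iff in both directions. As a preliminary we observe that every nonempty element of $T$ ends in $\zero$: by Lemma \ref{le:last_block} the last letter of $\tau\in T$ must itself be in $T$, and since $\one\notin T$ only $\zero$ is allowed. Consequently, if a minimal $\tau$ starts with $\zero$, then writing $\tau=\zero\cdot\tau'$ and using $\zero\in T$ with Lemma \ref{le:T_prefix_code} gives $\tau'\in T$; minimality forces $\tau'=\emp$, hence $\tau=\zero$.

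A useful auxiliary fact I would establish up front is that $\one\zero^{t+1}\in T$: the $q$-condition is precisely the defining property of $t$, and the $p$-condition follows from Lemma \ref{le:q_pref}~\ref{it1:q_pref}, which gives $p\leq_s\ee(\zero)\leq_s\ee(\one\zero^{t+1})$. Applying Lemma \ref{le:q_pref}~\ref{it3:q_pref} to this element then yields the clean suffix relation $q\leq_s\ff(\one\zero^{t+1})$, which I will use repeatedly to transport $q$-conditions across decompositions.

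For the forward direction when $\tau$ starts with $\one$, write $\tau=\mu\one\zero^j$ with $j\geq 1$ maximal. The lower bound $j\geq t+1$ follows from Lemma \ref{le:q_pref}~\ref{it3:q_pref} applied to $\tau$ together with the minimality in the definition of $t$. For the upper bound, assume $j>t+1$ and consider the factorization $\tau=(\mu\one\zero^{t+1})\cdot\zero^{j-t-1}$; the right factor lies in $T$ since $\zero\in T$, and for the left factor $\tau_1=\mu\one\zero^{t+1}$ the $p$-condition follows from $p\leq_s\ee(\zero)\leq_s\ee(\tau_1)$ (Lemma \ref{le:q_pref}~\ref{it1:q_pref}) while the $q$-condition follows from $q\leq_s\ff(\one\zero^{t+1})\leq_s\ff(\tau_1)$. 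This nontrivial factorization contradicts minimality, so $j=t+1$. The uniqueness clause ``no other occurrence of $\zero^{t+1}$'' is handled identically: if an extra occurrence exists, the leftmost one must be preceded by $\one$, and splitting $\tau$ at this position yields a prefix ending in $\one\zero^{t+1}$ (in $T$ by the same argument) and a nonempty suffix (in $T$ by Lemma \ref{le:T_prefix_code}).

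For the converse, $\tau=\zero$ is immediate. For $\tau=\one\eta\zero^{t+1}$ with no other occurrence of $\zero^{t+1}$, the membership $\tau\in T$ again uses the $p$ and $q$ reductions above. For minimality, assume $\tau=\tau_1\tau_2$ nontrivially in $T$: then $\tau_1$ is a proper prefix of $\tau$ starting with $\one$ and lying in $T$, so the forward analysis applied to $\tau_1$ shows that its trailing $\zero$-block has length at least $t+1$, producing a non-suffix occurrence of $\zero^{t+1}$ in $\tau$ — a contradiction. The main obstacle is the auxiliary fact that $\one\zero^{t+1}\in T$: it is the bridge that converts the implicit definition of $t$ (a suffix relation involving $q$ on both sides) into the clean transportable relation $q\leq_s\ff(\one\zero^{t+1})$, on which essentially all subsequent splittings rely.
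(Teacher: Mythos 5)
Your overall strategy is the paper's: minimal elements starting with $\zero$ reduce to $\zero$ via Lemma~\ref{le:T_prefix_code}, elements containing $\one$ are forced to end in exactly $\zero^{t+1}$ with no interior occurrence, and everything is transported through Lemma~\ref{le:q_pref}. However, there is a genuine gap at the very point you call the bridge: your proof that $\one\zero^{t+1}\in T$ is circular. To obtain the $p$-condition $p\leq_s p\,\ee(\one\zero^{t+1})$ you invoke Lemma~\ref{le:q_pref}~\ref{it1:q_pref}, but the hypothesis of that lemma is that some element $\zeta\one\zero^i$ with $i>0$ \emph{already lies in $T$} --- full membership, both the $p$-part and the $q$-part, since its proof uses $p\leq_s p\,\ee(\zeta\one\zero^i)$ to rule out $\ee(\zero)\leq_s p$, and the $q$-part to contradict minimality of $(p,q)$. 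At the moment you state the auxiliary fact, the only elements of $T$ you have are $\zero$ and its powers; the defining property of $t$ supplies only the $q$-half of the membership of $\one\zero^{t+1}$. So the only candidate you could feed into Lemma~\ref{le:q_pref} is $\one\zero^{t+1}$ itself, which is exactly what you are trying to prove.

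The missing step --- and the first step of the paper's proof --- is a bootstrap that needs no appeal to Lemma~\ref{le:q_pref}: from $\zero\in T$ one gets $(p,q)\leq_s (\ee(\zero)^*,\ff(\zero)^*)$, hence for every sufficiently large $i$ both $p\leq_s\ee(\zero)^i$ and $q\leq_s\ff(\zero)^i$, which yields $\one\zero^i\in T$ outright. Only now can Lemma~\ref{le:q_pref}~\ref{it1:q_pref} be applied (to this element) to conclude that $p$ is a proper suffix of $\ee(\zero)$; this makes the $p$-condition automatic for every word ending in $\zero$, so membership of such words reduces to the $q$-condition, and the least $t$ of the statement really does satisfy $\one\zero^{t+1}\in T$. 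Note the asymmetry in how the gap propagates: your forward direction can be repaired internally, because there the hypothesis $\tau=\mu\one\zero^j\in T$ legitimizes Lemma~\ref{le:q_pref}; but your converse direction, where no element of $T$ containing $\one$ is given a priori, rests entirely on the auxiliary fact and hence on this bootstrap. Once it is supplied, the rest of your argument (maximal trailing block, leftmost extra occurrence preceded by $\one$, splitting via Lemma~\ref{le:T_prefix_code}) is sound and coincides in substance with the paper's proof.
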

\begin{proof}
	From $(p,q)\leq_s (p \ee(\zero),q \ff(\zero))$, we have that $(p,q)$ is a suffix of $(\ee(\zero)^*,\ff(\zero)^*)$. Hence there exists a least non negative integer $t$ such that $(p,q)$ is a suffix of $(p\ee(\one \zero^{t+1}),\ff(\one \zero^{t+1}))$, that is, such that $ \one \zero^{t+1} \in T$.
	
	Lemma \ref{le:q_pref}, items \ref{it1:q_pref} and \ref{it3:q_pref} yield that
	\begin{equation}\label{eq:TPred_with_1}
	\zeta \one \zero^i \in T \quad \text{ if and only if } \quad i \geq t+1,
	\end{equation}
which implies that $\zeta \zero^{t+1} \in T$ for all $\zeta$.
		We may now characterize the minimal generating set of $T$.
	
	As $\zero \in T$, using Lemma~\ref{le:T_prefix_code}, we have that the only minimal generating element $\tau \in T$ starting with $\zero$ is $\tau = \zero$.
	
	Assume now that $\tau$ is a minimal generating element of $T$ starting with $\one$.
	By \eqref{eq:TPred_with_1}, $\zero^i$ is a suffix of $\tau$ with $i \geq t+1$, hence $\zero^{t+1} \leq_s \tau$.
	Let us write $\tau = \zeta \zero^{t+1} \zeta'$.
	As $\zeta \zero^{t+1} \in T$, Lemma~\ref{le:T_prefix_code} implies $\zeta' \in T$, and minimality of $\tau$ implies $\zeta' = \emp$.
	Hence, the only occurrence of $\zeta^{t+1}$ in $\tau$ is as its suffix.
	
	Assume now that $\tau$ has prefix $\one$, suffix $\zero^{t+1}$, and there is no other occurrence of $\zero^{t+1}$.
	Have $\tau = \tau_1 \tau_2$ with $\tau_1,\tau_2 \in T$ and $\tau_1$ non-empty.
	As $\one$ is a prefix of $\tau_1$, we may write $\tau_1 = \zeta \one \zero^i$, and thus by \eqref{eq:TPred_with_1} we have $i \geq t+1$, which produces an occurrence of $\zero^{t+1}$, and thus $\tau_2 = \emp$.
	Therefore, there is no decomposition of $\tau$, and it is a minimal element of $T$.
\end{proof}

\subsection*{Formalization of letter blocks, the set $T$ and the result}

We skip the formal construction of morphisms $\ee$ and $\ff$ as much more Isabelle's concepts would need to be introduced in order to explain its technical details.
We invite the reader to inspect it in the full code.

The case when only one letter block exists is treated rather implicitly in the human proof.
Nevertheless, in the formalization, we have the following explicit claim. 

\begin{isaframe}
	\isacommand{lemma}\isamarkupfalse%
	\ unique{\isacharunderscore}block{\isacharcolon}\isanewline
	\ \ \isakeyword{assumes}\ {\isachardoublequoteopen}g\ e\ {\isacharequal}\isactrlsub m\ h\ f{\isachardoublequoteclose}\ \isanewline
	\ \ \ \ \isakeyword{and}\ {\isachardoublequoteopen}{\isasymAnd}\ e{\isacharprime}\ f{\isacharprime}{\isachardot}\ g\ e{\isacharprime}\ {\isacharequal}\isactrlsub m\ h\ f{\isacharprime}{\isasymLongrightarrow}\ {\isacharparenleft}e{\isacharprime}{\isacharcomma}f{\isacharprime}{\isacharparenright}\ {\isacharequal}\ {\isacharparenleft}e{\isacharcomma}f{\isacharparenright}{\isachardoublequoteclose}\isanewline
	\ \ \ \ \isakeyword{and}\ {\isachardoublequoteopen}g\isactrlsub {\isadigit{0}}\ r\ {\isacharequal}\isactrlsub m\ h\isactrlsub {\isadigit{0}}\ s{\isachardoublequoteclose}\ \isanewline
	\ \ \isakeyword{shows}\ {\isachardoublequoteopen}{\isacharparenleft}r{\isacharcomma}s{\isacharparenright}\ {\isacharequal}\ {\isacharparenleft}p\ {\isasymcdot}\ e\ {\isasymcdot}\ p{\isasyminverse}{\isacharcomma}\ q\ {\isasymcdot}\ f\ {\isasymcdot}\ q{\isasyminverse}{\isacharparenright}{\isachardoublequoteclose}%
\end{isaframe}

The assumption of existence of both letter blocks is introduced as a locale which used further on.

\begin{isaframe}
\isacommand{locale}\isamarkupfalse%
\ binary{\isacharunderscore}intersection{\isacharunderscore}blocks\ {\isacharequal}\ binary{\isacharunderscore}intersection{\isacharunderscore}pq\ {\isacharplus}\isanewline
\ \ \isakeyword{assumes}\ \ minblock{\isadigit{0}}{\isacharcolon}\ \ {\isachardoublequoteopen}g\ {\isacharparenleft}{\isasymee}\ {\isasymzero}{\isacharparenright}\ {\isacharequal}\isactrlsub m\ h\ {\isacharparenleft}{\isasymff}\ {\isasymzero}{\isacharparenright}{\isachardoublequoteclose}\ \isakeyword{and}\isanewline
\ \ \ \ \ \ \ \ \ \ \ hdblock{\isadigit{0}}{\isacharcolon}\ \ \ {\isachardoublequoteopen}{\isasymee}\ {\isasymzero}{\isacharbang}{\isadigit{0}}\ {\isacharequal}\ bin{\isadigit{0}}{\isachardoublequoteclose}\ \isakeyword{and} \\
\mbox{\ }\isacomment{{\isasymee}\ {\isasymzero}\isacharbang{\isadigit{0}} is the first element of the list {\isasymee}\ {\isasymzero}}
\isanewline
\ \ \ \ \ \ \ \ \ \ \ minblock{\isadigit{1}}{\isacharcolon}\ \ {\isachardoublequoteopen}g\ {\isacharparenleft}{\isasymee}\ {\isasymone}{\isacharparenright}\ {\isacharequal}\isactrlsub m\ h\ {\isacharparenleft}{\isasymff}\ {\isasymone}{\isacharparenright}{\isachardoublequoteclose}\ \isakeyword{and}\isanewline
\ \ \ \ \ \ \ \ \ \ \ hdblock{\isadigit{1}}{\isacharcolon}\ \ \ {\isachardoublequoteopen}{\isasymee}\ {\isasymone}{\isacharbang}{\isadigit{0}}\ {\isacharequal}\ bin{\isadigit{1}}{\isachardoublequoteclose}
\end{isaframe}

The set $T$ is introduced as the predicate of its elements, which is more suitable for further use.

\begin{isaframe}
\isacommand{definition}\isamarkupfalse%
\ Tpred\ {\isacharcolon}{\isacharcolon}\ {\isachardoublequoteopen}binA\ list\ {\isasymRightarrow}\ bool{\isachardoublequoteclose}\ \isakeyword{where}\ {\isachardoublequoteopen}Tpred\ {\isasymtau}\ {\isasymequiv}\ p\ {\isasymle}s\ p\ {\isasymcdot}\ {\isasymee}\ {\isasymtau}\ {\isasymand}\ q\ {\isasymle}s\ q\ {\isasymcdot}\ {\isasymff}\ {\isasymtau}{\isachardoublequoteclose}\ \isanewline
\isacommand{definition}\isamarkupfalse%
\ T\ \isakeyword{where}\ {\isachardoublequoteopen}T\ {\isasymequiv}\ {\isacharbraceleft}{\isasymtau}\ {\isachardot}\ \ Tpred\ {\isasymtau}{\isacharbraceright}{\isachardoublequoteclose}%
\end{isaframe}

The relation between the solutions, the morphism $\ee$ and $\ff$, and the set $T$ (i.e., the predicate \isaelem{Tpred}), is now a consequence of a few more straightforward lemmas in Isabelle resulting in the following:
\begin{isaframe}
\isacommand{corollary}\isamarkupfalse%
\ KeyRelation{\isacharcolon}\ {\isachardoublequoteopen}{\isasymC}{\isacharbrackleft}g\isactrlsub {\isadigit{0}}{\isacharcomma}h\isactrlsub {\isadigit{0}}{\isacharbrackright}\ {\isacharequal}\ {\isacharbraceleft}{\isacharparenleft}{\isacharparenleft}p\ {\isasymcdot}\ {\isasymee}\ {\isasymtau}{\isacharparenright}\ {\isasymcdot}\ p{\isasyminverse}{\isacharcomma}{\isacharparenleft}q\ {\isasymcdot}\ {\isasymff}\ {\isasymtau}{\isacharparenright}\ {\isasymcdot}\ q{\isasyminverse}{\isacharparenright}\ \ {\isacharbar}\ {\isasymtau}{\isachardot}\ Tpred\ {\isasymtau}\ {\isacharbraceright}{\isachardoublequoteclose}%
\end{isaframe}

Formalizations of Lemmas~\ref{le:T_prefix_code} and \ref{le:last_block} are straightforward:
\begin{isaframe}
\isacommand{lemma}\isamarkupfalse%
\ T{\isacharunderscore}prefix{\isacharunderscore}code{\isacharcolon}\ \isakeyword{assumes}\ {\isachardoublequoteopen}Tpred\ {\isasymtau}\isactrlsub {\isadigit{1}}{\isachardoublequoteclose}\ \isakeyword{and}\ {\isachardoublequoteopen}Tpred\ {\isacharparenleft}{\isasymtau}\isactrlsub {\isadigit{1}}\ {\isasymcdot}\ {\isasymtau}\isactrlsub {\isadigit{2}}{\isacharparenright}{\isachardoublequoteclose}\ \isakeyword{shows}\ {\isachardoublequoteopen}Tpred\ {\isasymtau}\isactrlsub {\isadigit{2}}{\isachardoublequoteclose}%
\isacomment{Lemma~\ref{le:T_prefix_code}}
\isanewline
\isacommand{lemma}\isamarkupfalse%
\ last{\isacharunderscore}block{\isacharcolon}\ {\isachardoublequoteopen}Tpred\ {\isacharparenleft}z\ {\isasymcdot}\ {\isacharbrackleft}c{\isacharbrackright}{\isacharparenright}\ {\isasymLongrightarrow}\ Tpred\ {\isacharbrackleft}c{\isacharbrackright}{\isachardoublequoteclose}%
\isacomment{Lemma~\ref{le:last_block}}
\end{isaframe}

The human proof of Lemma~\ref{le:last_block} contains several steps which depend on some level of insight into properties of binary codes. The formalization of this proof is therefore particularly interesting and important (and demanding). The main proof is preceded by a dedicated locale that contains forty three claims, including the claims of Lemma~\ref{le:q_pref}. In a sense, therefore, the proof of the lemma is fragmented into forty three smaller steps. It should be made clear, however, that the fragmentation is to a great extent a matter of taste, since a single proof can be often quite naturally divided into several lemmas, or vice versa. Moreover, fourteen lemmas out of the forty three are of purely preparatory nature, allowing to use other claims formulated for prefixes in a reversed way for suffixes. This is something which in the given human proof is done by a simple appeal to a ``mirrored situation'', an insight that is hardly possible to formalize in a uniform way.   

We do not list the formalized equivalents of Lemmas~\ref{le:q_pref} and \ref{le:one_block} as they are split in the code into several lemmas.

The characterization of the set $T$ is concluded in the two following locales, the first, called \isaelem{binary{\isacharunderscore}intersection{\isacharunderscore}blocks{\isacharunderscore}trivial}, is for the case $\zero, \one \in T$, the second, named \isaelem{binary{\isacharunderscore}intersection{\isacharunderscore}blocks{\isacharunderscore}nontrivial}, for the case $\zero \in T, \one \not \in T$.
The term \isaelem{\isactrlbold B\ T} stands for a basis of $T$, i.e., the set of its minimal elements, and the term \isaelem{Suc\ t} represents $t+1$.

\begin{isaframe}
\isacommand{locale}\isamarkupfalse%
\ binary{\isacharunderscore}intersection{\isacharunderscore}blocks{\isacharunderscore}trivial\ {\isacharequal}\ binary{\isacharunderscore}intersection{\isacharunderscore}blocks\ {\isacharplus}\isanewline
\ \ \isakeyword{assumes}\ \isanewline
\ \ \ \ easy{\isacharunderscore}block{\isadigit{0}}{\isacharcolon}\ {\isachardoublequoteopen}{\isacharparenleft}p\ {\isasymle}s\ p\ {\isasymcdot}\ {\isasymee}\ {\isasymzero}\ {\isasymand}\ q\ {\isasymle}s\ q\ {\isasymcdot}\ {\isasymff}\ {\isasymzero}{\isacharparenright}{\isachardoublequoteclose}\isanewline
\ \ \ \ \isakeyword{and}\ easy{\isacharunderscore}block{\isadigit{1}}{\isacharcolon}\ {\isachardoublequoteopen}{\isacharparenleft}p\ {\isasymle}s\ p\ {\isasymcdot}\ {\isasymee}\ {\isasymone}\ {\isasymand}\ q\ {\isasymle}s\ q\ {\isasymcdot}\ {\isasymff}\ {\isasymone}{\isacharparenright}{\isachardoublequoteclose}\isanewline
\isakeyword{begin}\isanewline
\isanewline
\isacommand{theorem}\isamarkupfalse%
\ {\isachardoublequoteopen}Tpred\ {\isasymtau}{\isachardoublequoteclose} \isacomment{i.e., $T = \gen{\{\zero,\one\}}$}
\isanewline
\isanewline
\isacommand{end}

\isanewline

\isacommand{locale}\isamarkupfalse%
\ binary{\isacharunderscore}intersection{\isacharunderscore}blocks{\isacharunderscore}nontrivial\ {\isacharequal}\ binary{\isacharunderscore}intersection{\isacharunderscore}blocks\isanewline
\ \ \isakeyword{for}\ t\ {\isacharplus}\isanewline
\ \ \isakeyword{assumes}\ \isanewline
\ \ \ \ \ \ \ \ easy{\isacharunderscore}block{\isacharcolon}\ {\isachardoublequoteopen}{\isacharparenleft}p\ {\isasymle}s\ p\ {\isasymcdot}\ {\isasymee}\ {\isasymzero}\ {\isasymand}\ q\ {\isasymle}s\ q\ {\isasymcdot}\ {\isasymff}\ {\isasymzero}{\isacharparenright}{\isachardoublequoteclose}\ \isanewline
\ \ \ \ \isakeyword{and}\ t{\isacharunderscore}block{\isacharcolon}\ \ {\isachardoublequoteopen}{\isasymnot}\ q\ {\isasymle}s\ q\ {\isasymcdot}\ {\isasymff}\ {\isasymone}\ {\isasymcdot}\ {\isasymff}\ {\isasymzero}{\isacharcircum}t{\isachardoublequoteclose}\isanewline
\ \ \ \ \isakeyword{and}\ t{\isacharunderscore}block{\isacharunderscore}suc{\isacharcolon}\ \ {\isachardoublequoteopen}q\ {\isasymle}s\ q\ {\isasymcdot}\ {\isasymff}\ {\isasymone}\ {\isasymcdot}\ {\isasymff}\ {\isasymzero}{\isacharcircum}Suc\ t{\isachardoublequoteclose}
\isanewline\isacommand{begin}\isanewline
\isanewline
\isacommand{corollary}\isamarkupfalse%
\ Tbasis{\isacharcolon}\ {\isachardoublequoteopen}\isactrlbold B\ T\ {\isacharequal}\ {\isacharbraceleft}{\isasymtau}{\isachardot}\ {\isasymtau}\ {\isacharequal}\ {\isasymzero}\ {\isasymor}\ {\isacharparenleft}{\isasymone}\ {\isasymle}p\ {\isasymtau}\ {\isasymand}\ {\isasymzero}{\isacharcircum}Suc\ t\ {\isasymle}s\ {\isasymtau}\ {\isasymand}\ {\isasymnot}\ {\isasymzero}{\isacharcircum}Suc\ t\ {\isasymle}f\ butlast\ {\isasymtau}{\isacharparenright}{\isacharbraceright}{\isachardoublequoteclose}%
\isacomment{Lemma~\ref{le:Tform}}
\isanewline
\isanewline\isacommand{end}
\end{isaframe}

Let us explain the notation in the claim \isaelem{Tbasis}: \isaelem{{\isasymle}f} stands for ``is factor of'' and the function butlast returns the list without its last element.

\section{Summary of the proof}
Returning from the coincidence set back to the intersection properly speaking, the main claim (Theorem 2) of  \cite{JuhaniIntersection} is that if $\{x,y\}$ and $\{u,v\}$ are binary codes, then the intersection $I = \{x,y\}^* \cap \{u,v\}^*$
has one of the following forms:
\begin{align}
\tag{$*$} I &= \{\beta,\gamma\}^* \\
\tag{$**$} I &=\left(\beta_0 + \beta(\gamma(1+\delta+ \cdots + \delta^t))^*\epsilon\right)^*
\end{align}
Let us summarize our proof and show that it agrees with the formulation from \cite{JuhaniIntersection}.
Recall that, by definition, we have $\{x,y\}= \{g_0(\zero),g_0(\one)\}$ and $\{u,v\}= \{h_0(\zero),h_0(\one)\}$.

{\bf 0.} If $I = \{\emp\}$, then the claim holds for $\beta = \gamma = \emp$. 

{\bf 1.} Let therefore $I$ contain a nonempty word. That is, $\CC(g_0,h_0)$ contains at least one minimal solution. Then $\alpha_g$ and $\alpha_h$ are prefix comparable. By symmetry, we assume $\alpha_h \leq \alpha_g$ and $\alpha = \alpha_h^{-1}\alpha_g$  
is well defined.

{\bf 1.1.} If there is no starting block, then the construction of a solution is deterministic, hence $\CC(g_0,h_0)$ contains a unique minimal solution $(r,s)$. Then $I = \{\beta,\gamma\}^*$ with $\beta = g_0(r) = h_0(s)$ and $\gamma = \emp$.

{\bf 1.2.} Let now the starting block exist, i.e., there exist $(p,q)$ such that $\alpha g(p) = h(q)$. Then each solution $(r,s)$ has a block decomposition $\tau$. We define non erasing morphisms $\ee,\ff: \{\zero,\one\}^*\to \{\zero,\one\}^*$ such that, for a solution $(r,s)$ with the block decomposition $\tau$, we have $g(\ee(\tau)) = h (\ff(\tau))$. Let $T$ be the set of block decompositions of all solutions. That is, let
   \[ \CC(g_0,h_0) = \{ (p\ee(\tau)p^{-1},q\ff(\tau)q^{-1}) \mid \tau \in T\}\,.\]
   
Note that at this moment we do not guarantee that $g(\ee(c))= h(\ff(c))$, $c \in \{\zero,\one\}$, that is, $(\ee(c),\ff(c))$ need not be defined. Because of the existence of at least one minimal solution, we may however assume, by symmetry, that $\zeta\zero \in T$ for some $\zeta$. Then $\zero \in T$ by Lemma \ref{le:last_block}, in particular $g(\ee(\zero))= h(\ff(\zero))$.  

{\bf 1.2.1.} If $(\ee(\one),\ff(\one))$ is not a letter block, then $T = \zero^*$, and $I = \{\beta,\gamma\}^*$ with 
\begin{align*}
\beta &= g_0\left(p\,\ee(\zero)\,p^{-1}\right) = h_0\left(q\,\ff(\zero)\,q^{-1}\right), & \gamma &= \emp.
\end{align*}

{\bf 1.2.2.} Suppose that $(\ee(\one),\ff(\one))$ is a letter block.

{\bf 1.2.2.1.} If $\one \in T$, then  $T = \{\zero,\one\}^*$, and
$I = \{\beta,\gamma\}^*$ with 
\begin{align*}
\beta &= g_0\left(p\,\ee(\zero)\,p^{-1}\right) = h_0\left(q\,\ff(\zero)\,q^{-1}\right), & \gamma &= g_0\left(p\,\ee(\one)\,p^{-1}\right) = h_0\left(q\,\ff(\one)\,q^{-1}\right).
\end{align*}

{\bf 1.2.2.2.} If $\one \notin T$, then by Lemma \ref{le:Tform}, there is a non negative integer $t$ such that
 \[
 T = \left( \zero+ \left(\one + \one \zero + \cdots + \one \zero^t\right)^*\one \zero^{{t+1}}\right)^*\,.
 \]
Using Lemma \ref{le:q_pref}~\ref{it2:q_pref}, we now have 
\[
I =\left(\beta_0 + \beta(\gamma(1+\delta+ \cdots + \delta^t))^*\epsilon\right)^*
\]
where 
\begin{align*}
\beta_0 &= g_0\left(p\,\ee(\zero)\,p^{-1}\right) = h_0\left(q\,\ff(\zero)\,q^{-1}\right) \\
\beta &= \alpha_g g(p)  = \alpha_h h(q) \\
\gamma &=  g\left(\ee(\one)\right) = h\left(\ff(\one)\right)\\
\delta &= g\left(\ee(\zero)\right) = h\left(\ff(\zero)\right)\\
\epsilon &= g\left(\ee(\one\zero^{t+1})\,p^{-1}\right)\alpha_g^{-1}   = h\left(\ff(\one\zero^{t+1})\,q^{-1}\right)\alpha_h^{-1}  \,.
\end{align*}
This last case, in which the intersection is infinitely generated, is further specified in \cite[Theorem 3]{JuhaniIntersection}. The generating set is of one of the following forms (we keep the notation of words from \cite{JuhaniIntersection}, although it is not compatible with the notation above; however, we modify integer variables):
\begin{align}
\tag{$\dagger$} &\beta\gamma + \beta(\gamma\beta)^t\left(\delta\left(1 + \gamma\beta + \cdots + (\gamma\beta)^t\right)\right)^*\delta\gamma \\
\tag{$\dagger\dagger$} &\beta\gamma + \beta(\gamma\beta)^{t+m+1}\left(\delta\left(1 + \gamma\beta + \cdots + (\gamma\beta)^t\right)\right)^*\delta(\beta \left(\gamma\beta)^{m}\right)^{-1} 
\end{align}
for some $0 \leq m,t$, where $\delta$ and $\gamma\beta$ are nonempty and $\pref_1(\delta) \neq \pref_1(\gamma\beta)$.

Here 
\begin{align*}
\beta\gamma &= g_0\left(p\,\ee(\zero)\,p^{-1}\right) = h_0\left(q\,\ff(\zero)\,q^{-1}\right) \\
\delta &= g\left(\ee(\one)\right)  = h\left(\ff(\one)\right)\\
\gamma\beta &= g\left(\ee(\zero)\right) = h\left(\ff(\zero)\right) 
\end{align*}
and 
\[
\alpha_gg(p) = \alpha_hh(q) = \left\{
\begin{array}{ll}
\beta(\gamma\beta)^t & \quad\text{for ($\dagger$)} \\[1em]
\beta(\gamma\beta)^{t+m+1} & \quad \text{for ($\dagger\dagger$)} \,.
\end{array}
\right.
\]
The possibility $(\dagger\dagger)$ corresponds to the situation when $f'\ff(\zero)^m$ is a suffix of $\ff(\one)$, where $f'$ is a suffix of $\ff (\zero)$ such that 
$q = f'\ff(\zero)^{m+t+1}$ (and $\beta = \alpha_hh(f')$). In other words, the difference between ($\dagger$) and ($\dagger\dagger$) is whether $\ff(\one)$ contributes to the eventual occurrence of $q$ as a suffix of $\ff(\one\zero^{t+1})$.
\medskip

We finally illustrate the theory by several examples. The first two are from \cite{JuhaniIntersection}.
\begin{example}
	\begin{align*}
	g_0:  \zero &\mapsto \aaa & \one &\mapsto \aaa^m\bbb & \alpha_g &= \alpha = \aaa^m & g:  \zero &\mapsto \aaa & \one &\mapsto \bbb\aaa^m\\
	h_0: \zero &\mapsto \aaa & \one &\mapsto \bbb\aaa^m & \alpha_h &= \emp & h:  \zero &\mapsto \aaa & \one &\mapsto \bbb\aaa^m\ \\
	\ee: \zero &\mapsto \zero & \one &\mapsto \one & p&=\emp & & & & \\
	\ff: \zero &\mapsto \zero & \one &\mapsto \one & q&= \zero^m & t&=m  & & \\[-2em]
	\end{align*}
	\begin{align*}
	T & = \left(\zero +\left(\one + \one\zero + \cdots + \one\zero^{m-1}\right)^*\one\zero^m\right)^* \\
	I & = \aaa +\left(\aaa^m\bbb + \aaa^m\bbb\aaa + \cdots + \aaa^m\bbb\aaa^{m-1}\right)^*\aaa^m\bbb\aaa^m
	\end{align*}
\end{example}	

\begin{example}
	\begin{align*}
	g_0:  \zero &\mapsto \aaa\bbb\aaa & \one &\mapsto \aaa\aaa\bbb & \alpha_g &= \alpha = \aaa & g:  \zero &\mapsto \bbb\aaa\aaa & \one &\mapsto \aaa\bbb\aaa\\
	h_0: \zero &\mapsto \aaa & \one &\mapsto \bbb\aaa\aaa\bbb\aaa & \alpha_h &= \emp & h:  \zero &\mapsto \aaa  & \one &\mapsto \bbb\aaa\aaa\bbb\aaa  \\
	\ee: \zero &\mapsto \zero\zero & \one &\mapsto \one\one & p&=\emp & & & & \\
	\ff: \zero &\mapsto \one\zero & \one &\mapsto \zero\one & q&=\zero & t&=1 & & \\[-2em]
	\end{align*}
	\begin{align*} 
	T & = \left(\zero + \one^+\zero\right)^* \\
	I & = (\aaa\bbb\aaa\aaa\bbb\aaa +\left(\aaa\aaa\bbb\aaa\aaa\bbb\right)^+\aaa\bbb\aaa\aaa\bbb\aaa)^* = \left(\aaa(\aaa\bbb\aaa\aaa\bbb\aaa)^*\bbb\aaa\aaa\bbb\aaa\right)^*
	\end{align*}
\end{example}	

 The noteworthy property of the following example is that $\ff(\zero)$ is a suffix of $\ff(\one)$. The example therefore illustrates the possibility ($\dagger\dagger$) above. 

\begin{example}
	\begin{align*}
	g_0:  \zero &\mapsto \aaa\aaa & \one &\mapsto \aaa^6\bbb & \alpha_g &= \alpha = \aaa^6 & g:  \zero &\mapsto \aaa\aaa & \one &\mapsto \bbb\aaa^6\\
	h_0:  \zero &\mapsto \aaa & \one &\mapsto \bbb\aaa^4 & \alpha_h &= \emp & h:  \zero &\mapsto \aaa  & \one &\mapsto \bbb\aaa^4  \\
	\ee: \zero &\mapsto \zero & \one &\mapsto \one & p&=\emp & & & & \\
	\ff: \zero &\mapsto \zero\zero & \one &\mapsto \one\zero\zero & q&=\zero^6 & t&=2 & & \\[-2em]
	\end{align*}
	\begin{align*} 
	T & = \left(\zero + (\one + (\one\zero)^*\one\zero\zero\right)^* \\
	I & = (\aaa\aaa +\left(\aaa^6\bbb + \aaa^6\bbb\aaa\aaa\right)^*\aaa^6\bbb\aaa\aaa\aaa\aaa)^*
	\end{align*}
\end{example}	

\begin{example}\label{extab}
	Finally, Table \ref{tab} lists various situations in which the intersection is generated by at most one word. 
	Interesting is the last line where all three blocks exist, yet the intersection contains the empty word only. Note that $(p,q)$ is not a suffix of $(pe(\tau),qf(\tau))$ for any nonempty $\tau$ in that case.	
	\begin{table}[ht]
		\bgroup
		\def\arraystretch{1.5}
		\setlength\tabcolsep{.5em}
		\begin{tabular}{c c | c c | c | c | c | c | c | c}
			$g_0(\zero)$ 				 &$g_0(\one)$ 		 & $h_0(\zero)$ 			& $h_0(\one)$  & $\alpha$	& $(p,q)$ 		& $(\ee(\zero),\ff(\zero))$ 		& $$(\ee(\one),\ff(\one))$$ 	& $I$ \\ 
			\noalign{\hrule height 1pt}
			\aaa\aaa\bbb\bbb 	 &\aaa\bbb 		 & \aaa\bbb\aaa 	& \bbb\aaa\bbb &\aaa 	& $(\one,\zero)$ & $\times$	& $(\one\one\one,\one\zero)$ & $\aaa\bbb\aaa\bbb\aaa\bbb^*$ \\  
			\hline
			\aaa\aaa		 	 &\aaa\bbb 		 & \aaa\bbb\aaa 	& \bbb\aaa 		&\aaa	 		& $(\one,\zero)$		& $\times$ 			& $(\one,\one)$ 		& $\{\emp\}$  \\ 
			\hline
			\aaa\aaa\bbb\bbb 	 &\aaa\bbb 		 & \aaa\bbb\aaa 	& \bbb\aaa\bbb\bbb 	&\aaa	& $(\one,\zero)$		& $\times $ 			& $\times$ 		& $\{\emp\}$ \\  
			\hline
			\aaa\aaa\bbb 		 &\aaa\bbb\aaa	 & \aaa\bbb\aaa 	& \bbb\aaa\aaa 			&\aaa	& $\times$		& $(\zero,\zero)$ 			& $(\one,\one)$ 		& $\aaa\bbb\aaa^*$  \\ 
			\hline
			\aaa\aaa\bbb	 	 &\aaa\bbb\bbb	 & \aaa\bbb\aaa 	& \bbb\bbb\aaa 		&\aaa	& $\times$		& $(\zero,\zero)$ 			& $(\one,\one)$ 		& $\{\emp\}$  \\ 
			\hline
			\aaa\aaa\bbb\bbb 	 &\aaa\bbb 		 & \aaa\bbb\aaa\aaa	& \bbb\bbb 		&\aaa			& $\times$ 		& $\times$ 			& $\times$  	& $\aaa\bbb\aaa\aaa\bbb\bbb^*$  \\ 
			\hline
			\aaa\aaa\bbb 		 &\aaa\bbb\bbb	 & \aaa\aaa 		& \bbb\bbb 		&\aaa		& $\times$ 		& $\times$ 			& $\times$  	& $\{\emp\}$  \\ 
			\hline
			\aaa\aaa\bbb 		 &\aaa\bbb\bbb	 & \aaa\aaa\bbb 	& \bbb\bbb\aaa 		&\aaa		& $\times$ 		& $\times$ 			& $(\one,\one)$  	& $\aaa\aaa\bbb^*$  \\ 
			\hline
			\aaa\aaa\bbb 		 &\aaa\bbb\bbb	 & \aaa\bbb\aaa 	& \bbb\aaa\bbb 		&\aaa		& $\times$ 		& $(\zero,\zero)$ 			& $\times$  	& $\{\emp\}$  \\ 
			\hline
			\aaa\bbb\aaa\aaa\bbb &\aaa\bbb\aaa\bbb\aaa\bbb	& \aaa	& \bbb\aaa 		&\aaa\bbb\aaa		& $(\emp,\zero\one)$ 	& $(\zero,\zero\one\one)$ 			& $(\one,\one\one\one)$  	& $\{\emp\}$ \\
			\noalign{\hrule height 1pt} \\[-1em]
		\end{tabular}
		\caption{}\label{tab}
		\egroup
	\end{table}	
	
\end{example}

\section*{Acknowledgments}

The authors acknowledge support by the Czech Science Foundation grant GA\v CR 20-20621S. 

\bibliographystyle{plain}
\bibliography{intersection}	
	
\end{document}